\newtheorem{theorem}{Theorem}
\newcommand{\blind}{1}
\begin{document}

\def\spacingset#1{\renewcommand{\baselinestretch}%
{#1}\small\normalsize} \spacingset{1.5}


\if1\blind
{
  \title{\bf Modeling Preferences: A Bayesian Mixture of Finite Mixtures for Rankings and Ratings}
  \author{Michael Pearce$^\text{a}$ and Elena A. Erosheva$^\text{abc}$\\
  \\
  $^\text{a}$Department of Statistics, University of Washington\\
  $^\text{b}$School of Social Work, University of Washington\\
  $^\text{c}$Center for Statistics and the Social Sciences, University of Washington}
  \maketitle
} \fi

\if0\blind
{
  \bigskip
  \bigskip
  \bigskip
  \begin{center}
    {\LARGE\bf Modeling Preferences: A Bayesian Mixture of Finite Mixtures Approach to Rankings and Ratings}
\end{center}
  \medskip
} \fi

\bigskip
\begin{abstract}
Rankings and ratings are commonly used to express preferences but provide distinct and complementary information. Rankings give ordinal and scale-free comparisons but lack granularity; ratings provide cardinal and granular assessments but may be highly subjective or inconsistent. Collecting and analyzing rankings and ratings jointly has not been performed until recently due to a lack of principled methods. In this work, we propose a flexible, joint statistical model for rankings and ratings under heterogeneous preferences: the Bradley-Terry-Luce-Binomial (BTL-Binomial). We employ a Bayesian mixture of finite mixtures (MFM) approach to estimate heterogeneous preferences, understand their inherent uncertainty, and make accurate decisions based on ranking and ratings jointly. We demonstrate the efficiency and practicality of the BTL-Binomial MFM approach on real and simulated datasets of ranking and rating preferences in peer review and survey data contexts.
\end{abstract}

\noindent%
{\it Keywords:}  Bradley-Terry-Luce, peer review, survey data, measurement, heterogeneity 
\vfill

\newpage
\spacingset{1.5} 

\section{Introduction}\label{sect:intro}

Rankings and ratings allow for expressing preferences on a collection of objects. Rankings of objects are relative judgements that utilize other objects as reference points. These judgements may come in many forms, such as ordered lists of complete or partial rankings, as well as pairwise or groupwise comparisons. While rankings provide ordinal and scale-free comparisons between objects, the information they provide is local in the sense that it is impossible to tell, for example, whether someone's preferences for first and second place are about equal or drastically different. On the other hand, ratings (often referred to as \textit{scores}) are numerical values that denote an object's perceived quality. Ratings are absolute judgements in the sense that ratings do not directly use other objects as reference points. Instead, ratings reflect preferences in relation to some standard or target level of performance that is indicated with verbal descriptions of a scale: low to high, poor to excellent, etc.  Ratings can  provide cardinal and granular assessments between objects and, when calibrated, allow for global comparisons. However, common ties when rating two or more objects limit the use of ratings for demarcating quality \citep{shah2018design}. Ratings may also be highly subjective or inconsistent \citep{biernat1995shifting,Wang2018}. As such, rankings and ratings provide distinct and complementary information on preferences.

Psychological and psychometric literatures have long been documenting properties and deficiencies of rankings and ratings as expressions of preferences. Rankings have been criticized for imposing high cognitive load, especially when the number of options is large \citep{alwin1985measurement}; for potentially forcing judges to make invalid distinctions in cases with low discriminability \citep{russell1994ranking}; and for being difficult to analyze by means of common statistical techniques \citep{sung2018visual}.  Ratings have been criticized for providing measurements that are coarse and for allowing judges to use the same numeric value for more than one case \citep{russell1994ranking}. Comparative judgements are less susceptible to noise: research on job performance measurement, measurement of attitudes, and person perception
has found that rankings have better validity and may have more accuracy than absolute ratings \citep{goffin2011all}.
To reconcile and draw on both approaches, psychological research has moved to suggest variations that try to combine good qualities of both expressions at the expense of more elaborate data collection mechanisms compared to either rankings or ratings (e.g., \cite{sung2018visual}).

In recent years, growing literatures in computer science and social science disciplines have suggested that collecting and analyzing both rankings and ratings may be beneficial to understanding preferences. That is, using rankings and ratings in tandem may retain the benefits and minimize the downsides of each data type \citep{ovadia2004ratings,
shah2018design,liu2022integrating}.
However, few such methods exist. Motivated by  problems in meta-search, information retrieval, and peer review, authors in computer science have proposed algorithmic approaches for merging ordinal and cardinal preferences \citep{ailon2010aggregation, 
liu2022integrating}. These works do not allow for the quantification of uncertainty. Furthermore, many assume that a judge's rankings and ratings must be internally consistent, which is often unrealistic and has to be enforced during data collection. Alternatively, authors in health sciences \citep{Salomon2003,Kim2015} 
rely on converting ordinal and cardinal data of one type into another (e.g., converting scores into a ranking via simple ordering). Conversion is suboptimal because it may distort or discard the observed data.

To the best of our knowledge, Mallows-Binomial \citep{pearce2022unified} is the first and only joint statistical model for rankings and ratings that does not rely on data conversion. The model shares parameters between Mallows and Binomial distributions for rankings and ratings, respectively, which may be used to perform inference on the absolute and relative qualities of objects. This model does not require judges to provide internally consistent rankings and ratings. However, it has three major drawbacks that limit practical applicability. First, it requires rankings be either top-$r$ or complete lists of objects. Thus, Mallows-Binomial cannot accommodate pairwise comparisons or when different judges have access to different sets of objects (the ``separate ballots" problem). Second, Mallows-Binomial does not allow for heterogeneity. Third, rankings that follow a Mallows distribution do not satisfy Luce's Choice Axiom  \citep{luce1959individual}.


The work in this paper is motivated by the following three applied settings. In each, rankings and ratings are of practical use, yet cannot be analyzed by the Mallows-Binomial. 

\vspace{-0.2in}
\paragraph{Setting 1: Paper Selection in Large Academic Conferences} We first consider quality assessment of papers to large academic conferences where no single reviewer evaluates all papers. This situation closely mirrors that studied by \cite{liu2022integrating} in computer science. We suppose a large number of papers are submitted to a conference and only a small subset may be accepted. To make decisions, each paper is assigned a few reviewers and each reviewer is assigned a few papers to review, such that reviewers generally assess overlapping subsets of papers. A simple method for collecting preferences is to ask each reviewer to rate each paper on a clearly-defined discrete  ``common scale''. Then, the papers with the best average ratings are selected for acceptance. However, this approach is suboptimal because (1) ratings may be inconsistent since reviewers may interpret the scale in unique ways \citep{baumgartner2001response}, and (2) delineating the papers based on average ratings may be impossible or imprecise since average ratings can produce ties or near ties, especially when the scale is coarse or the paper is assessed few times. We note that finding additional reviewers may be impractical. In addition, increasing the granularity of the rating scale may not increase precision, but instead increase noise \citep{miller1956magical}.
In this paper, we demonstrate that these problems can be addressed by the introduction of rankings. Both NeurIPS 2016 and ICML 2021 collected rankings from reviewers in addition to ratings \citep{shah2018design}. Still, a principled statistical method by which to incorporate rankings and ratings jointly that could be applicable to such settings does not exist. Because such conferences typically handle high volumes of paper submissions \citep{shah2018design}, we focus only on point estimation of paper quality and do not estimate any potential heterogeneity.

\vspace{-0.2in}
\paragraph{Setting 2: Proposal Selection in Grant Panel Review under Heterogeneity} In grant panel review, reviewers evaluate a small number of grant proposals with respect to some criteria such as the scientific merit. Often, mean ratings are used to communicate proposal quality for funding decisions, even though they exhibit similar problems to those described in the previous setting. At a time of funding scarcity--for example, R01 research award rates at the National Institutes of Health vary between 10 and 20\% \citep{erosheva2020nih}--it is most important to obtain clear and accurate demarcation of proposals at the top. As such, the addition of top-$r$ rankings, in which $r$ is slightly larger than the number of proposals to be funded, may be useful. Top-$r$ rankings provide additional information on the ``best" proposals without creating a substantial cognitive burden on reviewers to rank each and every proposal. \cite{pearce2022unified} studied this setting using a frequentist joint ranking-rating Mallows-Binomial model, under the assumption of a single ground-truth ranking of proposals (i.e., no heterogeneity). \cite{lee2012kuhnian} studied heterogeneity in peer review, arguing that research commonly overlooks normatively appropriate disagreements among reviewers.  Additionally, Mallows-Binomial is unable to account for situations in which not all reviewers assess every proposal due to reviewer burden or conflicts of interest. The model that we develop in this paper can handle incomplete or partial rankings and estimates heterogeneous preferences among reviewers and the associated uncertainty, and therefore allows for accurate decision-making at the top of the list.

\vspace{-0.2in}
\paragraph{Setting 3: Modeling of Survey Preference Data under Heterogeneity} The third setting relates to the analysis of survey data, where survey respondents provide preferences on a collection of items. We study a survey dataset on the sushi preferences of Japanese adults \citep{kamishima2003nantonac}. Respondents provided a complete ranking of ten sushi types and rated them on a 5-point scale. The coarse rating scale leads to frequent ties between items. Furthermore, many respondents rated only a few sushi items, creating a substantial amount of missing data. Given the limited available data, we demonstrate how our proposed model accurately combines information from rankings and incomplete ratings to model preferences of respondents and identify heterogeneity.


In this paper, we propose a flexible, joint statistical model for rankings and ratings under heterogeneity: The Bradley-Terry-Luce-Binomial (BTL-Binomial).  Using a computationally-efficient Bayesian mixture of finite mixtures (MFM) of \cite{miller2018mixture}, we simultaneously estimate both the amount and type of heterogeneity among judges. We develop tools for model interpretation and goodness-of-fit assessment, and illustrate those on real and simulated datasets from the three motivating settings to demonstrate the value and practicality of analyzing preferences jointly with rankings and ratings. 

The rest of this paper is organized as follows. 
In Section \ref{sect:related} we provide notation and review background information on preference modeling and heterogeneity. In Section \ref{sect:BTLBMFM}, we describe the BTL-Binomial MFM approach for jointly modeling rankings and ratings under heterogeneous preference ideologies, derive an efficient Bayesian estimation procedure, and provide tools for model assessment. We use simulated and real data to illustrate the proposed model in Section \ref{sect:apps}. We conclude with a discussion in Section \ref{sect:discussion}.

\section{Notation and Background}\label{sect:related}

Suppose $I$ judges assess $J$ objects. Let $\mathcal{S}=\{1,\dots,J\}$ be the complete set of objects and $\mathcal{S}_i\subseteq\mathcal{S}$ be the subset assessed by judge $i$. Let $R_i = |\mathcal{S}_i|$ be the size of $\mathcal{S}_i$. Let $\Pi_i = \{\Pi_i(1)\prec\Pi_i(2)\prec\dots\prec\Pi_i(r_i)\}$ be judge $i$'s ranking of length $r_i\leq R_i$, such that $\Pi_i(r)$ is the $r^\text{th}$-most preferred object by judge $i$ among $\mathcal{S}_i$. Let $X_{ij}\in\{0,1,\dots,M\}$ be the rating of judge $i$ to object $j$, such that $0$ is the best and $M$ the worst. This reversed rating scale maintains a symmetry with rankings, in that numerically low ratings correspond to numerically low rankings.

\subsection{Statistical Ranking and Rating Models}

We model rankings via the Bradley-Terry-Luce (BTL) family of distributions, which includes the Bradley-Terry for paired comparisons and Plackett-Luce for groupwise comparisons, partial rankings, and complete rankings \citep{Bradley1952,Plackett1975,luce1959individual}. In these models, a \textit{worth} parameter, $\omega_j>0$, is assigned to each object $j$ such that larger values correspond to a higher probability of being ranked highly. The model assumes that the probability of drawing a specific ranking $\pi$ of length $R$ from a set $\mathcal{S}$ is,
\begin{equation}
    P[\Pi=\pi | \omega_1,\dots,\omega_J] = \prod_{r=1}^R \frac{\omega_{\pi(r)}}{\sum_{j\in\mathcal{S}} \omega_j - \sum_{s=1}^{r-1} \omega_{\pi(s)}}.
\end{equation}
The model may be interpreted as a sequential selection distribution where at each stage, the probability of selecting an unranked object is its worth divided by the summed worths of all unranked objects. Previous literature in political preferences, college rankings, and survey data, among others, has used the BTL family to model rankings \citep{Gormley2006,
Caron2014,mollica2017bayesian}.
The BTL family satisfies Luce's Choice Axiom \citep{luce1959individual}, which states that selecting one object over another should not be affected by the presence or absence of other objects. Furthermore, the BTL family can handle rankings in which some reviewers do not consider all objects. 

Statistical rating models are uncommon. Instead, it is often easier to use summary statistics such as the mean or median \citep{lee2013bias,tay2020beyond,NIHPeerReview}. When scores arise from a discrete, ordinal, finite, and equally-spaced set, various distributions may be appropriate after linear transformation to the space of integers. We follow \cite{pearce2022unified} and model ratings using a Binomial distribution due to its simplicity, unimodality, and natural mean parameter.

\subsection{Heterogeneous Preferences}

Standard preference models rely on the assumption that each object has a single, true underlying quality. This assumption is inappropriate when judges exhibit heterogeneity. For example, voters of different political parties may have diverging opinions of candidates. Another example arises in peer review, where reviewers may adhere to distinct ideologies for what constitutes promising research based on their background and training \citep{lee2012kuhnian}. In such situations, we say the judges exhibit heterogeneous preference ideologies.

A latent class mixture model can be used to capture heterogeneous preference ideologies. Mixture models have been used in the context of both Mallows \citep{Busse2007,Ali2010}
and BTL distributions \citep{Gormley2006,
mollica2017bayesian}. To the best of our knowledge, no joint statistical model for rankings and ratings under heterogeneity exists. Latent class preference models generally assume there exist $K$ preference ideologies and that each judge adheres to precisely one. Latent classes represent the preference ideologies, such that each class $k\in\{1,\dots,K\}$ has its own set of parameters. We let $Z_i=k$ denote judge $i$'s class. 

The true number of preference classes, $K$, is often unknown and must be identified or estimated. Most of the literature on heterogeneous preferences fits separate models
under various choices of $K$ and selects the best-fitting or most parsimonious model via some goodness-of-fit criteria \citep{Gormley2006,mollica2017bayesian}. However, it is also possible to estimate $K$ probabilistically. A vast Bayesian literature exists regarding these models (e.g., \cite{
nobile2004posterior}).
We elect to use a mixture of finite mixtures (MFM) approach. MFMs can be described as Bayesian latent class mixture models in which the number of classes itself is a random variable and assigned a prior. In their most general form, MFMs are easily interpretable and consistent for the true number of classes as the sample size grows \citep{miller2018mixture}.

\section{BTL-Binomial MFM Model}\label{sect:BTLBMFM}

We now propose a joint statistical model for rankings and ratings under heterogeneity.
Under the BTL-Binomial MFM model, the observed preference data $\Pi$ and $X$ are assumed to arise from the following generative model:
\spacingset{1.1}
\begin{equation}
\begin{aligned}
    K &\sim f_K(\cdot) & \text{$f_K$ is a pmf on $\{1,2,\dots\}$}\label{eq:MFM}\\
    \gamma &\sim f_\gamma(\cdot) & \text{$f_\gamma$ is a pdf on $\mathbb{R}^+$}\\
   \pi | K,\gamma &\sim \text{Dirichlet}_K(\gamma,\dots,\gamma)\\
    (p_k,\theta_k) &\overset{iid}{\sim} f_{p,\theta}(\cdot) & \text{$f_{p,\theta}$ is a pdf on $[0,1]^J\times\mathbb{R}^+$; } k\in\{1,\dots,K\}\\
    Z_i | \pi &\overset{iid}{\sim} \text{Categorical}(\pi_1,\dots,\pi_K) & i\in\{1,\dots,I\}\\
    \Pi_i, X_i|Z_i=k,p,\theta &\overset{ind.}\sim \text{BTL-Binomial}(p_k,\theta_k)
\end{aligned}
\end{equation}
\spacingset{1.5}

We briefly interpret the generative model. The \textit{number of heterogeneous preference ideologies}, $K$, is drawn from a prior. Independently, a \textit{concentration parameter} $\gamma>0$ is drawn from a hyperprior, where $\gamma$ controls the concentration of class weights between sparsity (few classes have substantial weight) and equality (all classes have equal weight). The \textit{class weights}, $\pi=(\pi_1,\dots,\pi_K)$ are then drawn from a symmetric Dirichlet prior. Given $K$,  \textit{class-specific preference parameters} $(p_k,\theta_k)$ are drawn from a prior for each class $k$. After drawing the class label $Z_i$ for each judge $i$, their ranking and ratings $(\Pi_i,X_i)$ are drawn from a BTL-Binomial distribution with class-specific parameters.

\subsection{BTL-Binomial Distribution}\label{sect:BTLB}

Suppose a judge assesses $J$ objects using a ranking, $\Pi$, and ratings, $X$. Assume $\Pi$ is of length $R\leq J$, and each rating $X_{j}$, $j\in\mathcal{S}$, is an integer between $0$ (best) and $M$ (worst). $\mathcal{S}$, $R$, and $M$ are fixed and known. Under a Bradley-Terry-Luce-Binomial (BTL-Binomial) distribution, their joint probability is given by:
\spacingset{1.1}
\begin{equation}
\begin{aligned}P[\Pi=\pi,X=x|p,\theta]&=\prod_{r=1}^R \frac{\exp(-\theta p_{\pi(r)})}{\sum\limits_{j\in\mathcal{S}} \exp(-\theta p_{j})-\sum\limits_{s=1}^{r-1} \exp(-\theta p_{\pi(s)})} \times \prod_{j=1}^J {M\choose x_j}p_j^{x_j}(1-p_j)^{M-x_j}\\
    p &= [p_1 \dots p_J]^T\in[0,1]^J, \theta>0,\\
    \Pi,X_1,&\dots,X_J\text{ are mutually independent}.
\end{aligned}
\label{eq:BTL-B}
\end{equation}
\spacingset{1.5}
The BTL-Binomial model in equation~(\ref{eq:BTL-B}) combines a BTL ranking distribution parameterized by worth parameters $\omega_j = \exp(-\theta p_j)$ and a Binomial rating distribution for each object, with Binomial probability $p_j$. We call $p$ the \textit{object quality vector}, which contains the underlying object qualities on the unit interval. The parameter $p$ appears in both ranking and rating components of the model and thus ties together their estimation to learn preferences. $\theta$ is the \textit{consensus scale parameter}, which measures of the strength of ranking consensus.

The Binomial rating parameterization is straightforward and follows \cite{pearce2022unified}. The BTL ranking parameterization that sets each $\omega_j$ to $\exp(-\theta p_j)$ is new and requires further explanation. Note that small values of $p_j$ correspond to large $\omega_j$, since a small-valued object quality parameter corresponds to a high-quality object, which should thus be ranked highly with greater probability (and vice versa). The parameterization maintains the exponential distance interpretation of the Mallows and Mallows-Binomial models, in that ranking probabilities are determined based on an exponential relationship with rate controlled by $\theta$ \citep{Fligner1986,pearce2022unified}. Here, the difference between the underlying qualities of two objects, $p_B-p_A$, is the distance that controls pairwise ranking probabilities. That is because,
\begin{align}
    P[A\prec B] =\frac{\exp(-\theta p_A)}{\exp(-\theta p_A)+\exp(-\theta p_B)} = \frac{1}{1+\exp(-\theta(p_B-p_A))}. \nonumber
\end{align}
The parameterization also removes the standard identifiability concern of BTL models because the worth parameters are now constrained to the interval $[\exp(-\theta),1]$ and anchored via the ratings. This claim is made formally in Theorem \ref{prop:btlb_ident}, whose proof is relegated to the Appendix.

\begin{theorem}\label{prop:btlb_ident}
Let $M$, $J$, and $R$ be fixed and positive integers such that $R\leq J$. Then the BTL-Binomial($p,\theta$) model is identifiable.
\end{theorem}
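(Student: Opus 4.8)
The plan is to show that whenever two parameter pairs $(p,\theta)$ and $(p',\theta')$ induce the same joint law on $(\Pi,X)$, we must have $p=p'$ and $\theta=\theta'$. The structural fact I would exploit first is the mutual independence asserted in~(\ref{eq:BTL-B}): the joint pmf factors into a single BTL ranking marginal with worths $\omega_j=\exp(-\theta p_j)$ and $J$ independent Binomial rating marginals $X_j\sim\mathrm{Binomial}(M,p_j)$. Two parameter pairs thus induce the same joint law if and only if they induce the same ranking marginal and the same $J$ rating marginals, which decouples the problem into recovering $p$ from the ratings and then $\theta$ from the rankings.

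First I would recover $p$ from the Binomial marginals. For each object $j$ the rating distribution is $\mathrm{Binomial}(M,p_j)$ with $M\geq1$ known, and the map $q\mapsto\mathrm{Binomial}(M,q)$ is injective since its mean $Mq$ already determines $q$; hence equality of the $j$th rating marginals forces $p_j=p_j'$, and ranging over all $j$ gives $p=p'$. This step is routine and does not involve $\theta$.

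With $p=p'$ in hand, I would then recover $\theta$ from the ranking marginal. Because $R\geq1$, the top-$1$ probabilities $P[\Pi(1)=j]=\omega_j/\sum_{k\in\mathcal{S}}\omega_k$ are defined for every $j\in\mathcal{S}$, and their ratios return the worth ratios $\omega_A/\omega_B=\exp\!\big(-\theta(p_A-p_B)\big)$ for any pair $A,B$. Choosing a pair with $p_A\neq p_B$ and taking logarithms linearizes this into $\theta=-\log(\omega_A/\omega_B)/(p_A-p_B)$; equating the two expressions obtained for $(p,\theta)$ and $(p,\theta')$ and cancelling the nonzero factor $p_A-p_B$ yields $\theta=\theta'$.

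The step demanding the most care, and the main obstacle, is guaranteeing that a pair of objects with distinct qualities exists. If $p$ were constant across objects, all worths would equal $\exp(-\theta p)$, the ranking marginal would be uniform for every $\theta$, and $\theta$ would vanish from the likelihood altogether---so $\theta$ is recoverable exactly when $p$ is non-constant. I would accordingly either state the result on the non-constant regime, which is the only regime in which the rankings carry information about $\theta$, or restrict the parameter space to exclude this degenerate, measure-zero slice; away from it the two-step argument above gives full identifiability.
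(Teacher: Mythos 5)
Your proof is correct and shares the paper's overall two-step structure: first recover $p$ from the Binomial ratings via identifiability of $\mathrm{Binomial}(M,q)$, then recover $\theta$ from the ranking component with $p$ fixed. The execution of the second step differs. The paper equates the full BTL pmf under $(p_1,\theta_1)$ and $(p_1,\theta_2)$ for an arbitrary ranking $\pi$, takes logarithms to get a sum that must vanish, and then simply asserts that this ``for arbitrary $\pi$ will be true only when $\theta_1=\theta_2$.'' You instead use only the top-1 marginals $P[\Pi(1)=j]=\omega_j/\sum_{k}\omega_k$, whose pairwise ratios directly return $\exp(-\theta(p_A-p_B))$, so one logarithm and one pair with $p_A\neq p_B$ finish the argument; this is more elementary and makes the identification mechanism explicit rather than leaving the key step unargued. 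More importantly, your last paragraph catches something the paper glosses over: if $p$ is constant across objects, the ranking distribution is uniform for every $\theta$, so $\theta$ is genuinely unidentifiable on that slice of the parameter space, and the paper's ``only when $\theta_1=\theta_2$'' claim fails there. The theorem as stated over all of $[0,1]^J\times\mathbb{R}^+$ therefore needs exactly the caveat you supply --- either a non-constancy hypothesis on $p$ or a restriction of the parameter space --- so your version is not merely a different route but a repaired and tightened form of the paper's argument.
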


We now interpret BTL-Binomial parameters. The vector $p$ reflects object quality, in which values close to 0 (1) indicate high (low) quality.
$p$ may be ordered to form a consensus ranking, denoted $\pi_0$. For example, if $p=[0.5 \ 0.55 \ 0.1 \ 0.9]$ in a four object system, the consensus ranking $\pi_0 = 3\prec1\prec2\prec4$. Here, we say objects 1 and 2 are similar in quality, but object 3 is clearly highly quality than object 4. The consensus scale parameter $\theta$ is harder to interpret and may be considered a nuisance parameter. Most directly, $\theta$ is an input for calculating the probability that some object A is selected over object B in a pairwise tournament. For example, if $p_B-p_A=0.1$, then the probability that object $A$ is selected over $B$ is $1/(1+\exp(-0.1\times \theta))$ (see Table \ref{tab:theta}). Higher (lower) values of $\theta$ imply rankings among judges will be more (less) similar to each other.
\spacingset{1.1}
\begin{table}[h!!]
    \centering
    \begin{tabular}{ccccccc}
        $\theta$ & 1&5&10&20&40\\
        \hline
        $P[A\prec B|\theta, p_B-p_A=0.1]$ & 0.525 &  0.622  & 0.731  &   0.881 & 0.982
    \end{tabular}
    \caption{Pairwise ranking probabilities given $p_B-p_A=0.1$ under various $\theta$}\label{tab:theta}
\end{table}
\spacingset{1.5}

\subsection{Prior Selection}

Table \ref{tab:priors} summarizes model priors and hyperpriors. Following an example in \cite{fruhwirth2021generalized}, we assign $K$ a shifted Poisson prior such that $K-1\sim \text{Poisson}(\lambda)$. 
We assign $\gamma$ a Gamma$(\xi_1,\xi_2)$ hyperprior, as suggested in the Dirichlet Process Mixture (DPM) \citep{escobar1995bayesian}
and MFM \citep{miller2018mixture} literatures. When $\gamma$ is small, we expect some large, small, or even empty classes; when $\gamma$ is large the classes are expected to be roughly uniform in size. We assign $\pi$ (\textit{class weights}), a symmetric Dirichlet prior with concentration parameter $\gamma$. This so-called ``static" MFM is simple and common \citep{fruhwirth2021generalized}. We assign the BTL-Binomial parameters $p_{jk}$ i.i.d. $\text{Beta}(a,b)$ priors, which are not conjugate but simplify the posterior given Binomial ratings. Finally, we assign $\theta_k$ i.i.d. $\text{Gamma}(\gamma_1,\gamma_2)$ priors. 
\spacingset{1.1}
\begin{table}[h!!]
    \centering
    \begin{tabular}{ccc}
        Parameter & Interpretation & Prior\\
        \hline
        $K$ & Number of Ideology Classes &$\text{Poisson}(K-1|\lambda)$\\
        $\gamma$ & Dirichlet Concentration Parameter& Gamma$(\gamma|\xi_1,\xi_2)$\\
        $\pi$ & Class Weights & $\text{Dirichlet}_K(\pi|\gamma_,\dots,\gamma)$\\
        $p_k,\theta_k$ & BTL-Binomial Parameters &$\prod_{j=1}^J \text{Beta}(p_{jk}|a,b)\times\text{Gamma}(\theta_k|\gamma_1,\gamma_2)$  \\
    \end{tabular}
    \caption{Priors for the BTL-Binomial MFM Model}\label{tab:priors}
\end{table}
\spacingset{1.5}

Hyperparameter settings may be highly influential. $\lambda$ influences the prior expectation on $K$, such that $E_\lambda[K] = \lambda + 1$. However, the Dirichlet concentration parameter $\gamma$ allows for unequal weights between classes, thus influencing the number of non-empty classes, $K^+\leq K$. Values of $\gamma$ close to 0 allow for parsimony in the case of no heterogeneity; values greater than $1$ give higher probability to $K^+=K$. 
Selection of $\lambda$ is highly dependent on context; we suggest choosing the Gamma hyperparameters $\xi_1,\xi_2$ to provide density to $\gamma\in[0,3]$, which corresponds to substantial probability that $K^+$ may be any integer between $1$ and $K$. For the Beta hyperparameters, $a=b=1$ leads to a proper and minimally informative Uniform prior. Instead selecting $a$ and $b$ via an empirical Bayes approach based on the observed ratings may improve estimation efficiency. For the Gamma hyperparameters on $\theta$, setting $\gamma_1=1,\gamma_2=0$ leads to a flat but improper prior. We suggest choosing values to provide substantial density in the region $\theta\in[5,35]$, which corresponds to varying but reasonable levels of consensus.

\subsection{Estimation via Telescoping Sampler}\label{sect:bayes}

Until recently, MFM models have been computationally challenging to estimate due to difficulties associated with reversible jump MCMC (RJMCMC), the primary estimation tool \citep{nobile2004posterior}.
\cite{miller2018mixture} proved theoretical connections between the MFM and DPM models, thus expanding the toolkit and improving speed. Subsequently, \cite{fruhwirth2021generalized} proposed the ``telescoping sampler" which drastically lowered the computational burden of fitting MFM models. Their work cleverly decomposes the total number of latent classes, $K$, from the number of non-empty classes, $K^+$. Separating these quantities permits a simple Gibbs-type sampler (e.g., no RJMCMC) that is similar in form to those for Bayesian mixture models with fixed $K$. We adapt the telescoping sampler for the BTL-Binomial MFM model, which is presented in the Appendix alongside algorithms for estimation under fixed $K$ and maximum \textit{a posteriori} (MAP) estimation.

\subsection{Model Assessment}

We assess mixing and convergence by examining trace plots of quantities which are invariant to label-switching \citep{stephens2000dealing}. For MFM models, we follow the recommendation of \cite{fruhwirth2021generalized} to examine trace plots of $K^+$ or $\pi$ (ordered by mean). We also examine trace plots of $K$ and $\gamma$. If class-specific parameters suffer from label-switching, one may apply the algorithm of \cite{stephens2000dealing} to the posterior samples.

We also examine goodness of fit by comparing the observed and posterior predictive distributions of three types of statistics: (1) rating mean, by object; (2) rating variance, by object, and (3) pairwise probability that object $A$ is ranked above object $B$, for each pair of objects $(A,B)$. 
Under a well-fitting model, the observed and posterior predicted statistics should be similar. We assess similarity via visual inspection.


\section{Applications}\label{sect:apps}

We apply the BTL-Binomial model to three motivating examples: paper selection in large academic conferences under sparsity of comparisons, proposal selection in grant panel review under heterogeneity, and modeling of survey data under heterogeneity. 

\subsection{Paper Selection in Large Academic Conferences}\label{sect:appConf}
Our first application is to the paper selection process in large and highly competitive academic conferences. These conferences typically handle high volumes of paper submissions, and thus reviews are dispersed among many reviewers. Here, we simulate reviews in the form of ratings and rankings and use them to estimate proposal quality via the BTL-Binomial model. We focus on point estimation under the assumption of a single ideology among reviewers (i.e., fixed $K=1$). Beyond self-selection of papers into research areas and the timing restrictions of organizers, estimating heterogeneity in this context may be particularly noisy given the limited amount of data available from each reviewer.

\subsubsection{Simulation Setup}

Our simulation study is loosely based on that of \cite{liu2022integrating}, who proposed an algorithm to integrate rankings into ratings for 
the paper selection process used by the International Conference on Learning Representations 2017. 
In our study, we simulate a conference that has recruited $I=50$ reviewers to assess $J=50$ papers. Each reviewer $i$ cannot possibly assess every paper, so instead each is assigned a subset, $\mathcal{S}_i$, at random such that each paper receives an equal number of reviews. Specifically, $|\mathcal{S}_i|=R_i=R$ and each paper receives $R$ reviews since $I=J$. Reviewer $i$ first provides ratings $X_{ij}, j\in\mathcal{S}_i$. Ratings are integers between $0$ (exemplary) and $M$ (poor). Reviewers do not rate the other papers. Second, reviewer $i$ provides a top-4 ranking, $\Pi_i$, of their favorite papers among those assigned, without ties. We assume that a reviewer deems their ``unranked" papers (i.e., $\{j\in\mathcal{S}_i | j\not\in\Pi_i\}$) worse than those which were ranked. However, no information can be gleaned from reviewer $i$ for papers not in $\mathcal{S}_i$.

We generate ratings and top-4 rankings from a BTL-Binomial distribution. To capture different amounts of data and noise, we consider all combinations of the following values: (1) $R\in\{4,8,12,24\}$. Small $R$ signifies less work for each reviewer and provides less preference data. (2) $M\in\{4,9\}$ for a 5- or 10-point rating scale, respectively. Small $M$ increases the coarseness of ratings and thus the probability of ties. (3) $\theta\in\{1,10,20,40\}$. Large $\theta$ implies more consensus in rankings; see Table \ref{tab:theta}. In each simulation scenario, we draw $p_j\sim\text{Beta}(1,1) \overset{d}{=}\text{Uniform}[0,1]$, then fit a BTL-Binomial model with a single latent class to the data using hyperparameters $a=1,b=1,\gamma_1=5,\gamma_2=0.25$, which were chosen to be diffuse. Each scenario is replicated 100 times.

\subsubsection{Results}
We now demonstrate the model's ability to accurately estimate the true overall ranking of papers, $\pi_0$, and improve estimation of $\pi_0$ in comparison to the standard paper selection method based solely on ratings. That method is to order the papers by their mean rating and break ties on the basis of another reviewer \citep{shah2018design}. In the absence of additional reviewers, we use random tie-breaking. We let $\hat\pi_0^{\text{BTLB}}$ be the BTL-Binomial MAP estimate of $\pi_0$, determined by ordering papers based on $\hat p$. Similarly, we let $\hat\pi_0^\text{X}$ be the ratings-only MAP estimate of $\pi_0$, determined by ordering papers based on mean ratings. The accuracy of MAP estimates $(\hat p,\hat\theta)$ is shown in the Appendix. 

To measure the inaccuracy of each model, we calculate the percentage of object pairs in which the model incorrectly identifies the true order of the objects in $\pi_0$. This is equivalent to a normalized Kendall's $\tau$ distance between $\pi_0$ and a model estimate $\hat\pi_0$. We plot the mean inaccuracy across simulations for each combination of $R$, $M$, and $\theta$ from the BTL-Binomial and ratings-only models in Figure \ref{fig:simulation_decision}.

\spacingset{1.1}
\begin{figure}[h!!]
    \centering
    \includegraphics[width=\textwidth]{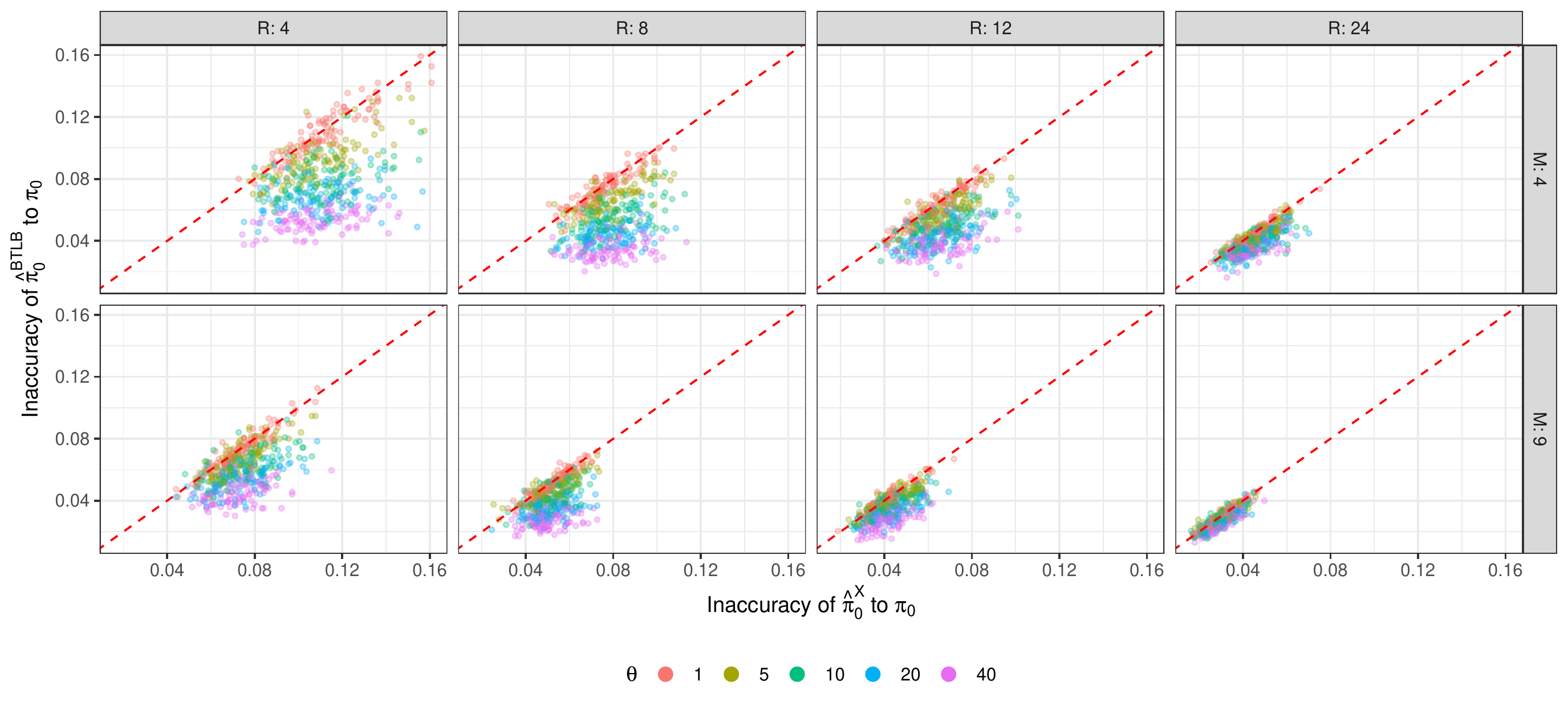}
    \caption{Scatterplots of the mean inaccuracy across 100 simulations of estimated $\hat\pi_0^{BTLB}$ (BTL-Binomial model) and $\hat\pi_0^\text{X}$ (standard ratings-only model) to the true ranking of papers $\pi_0$ under combinations of $R$, $M$, and $\theta$.}
    \label{fig:simulation_decision}
\end{figure}
\spacingset{1.5}

The BTL-Binomial model outperforms the standard ratings-only model on average for every combination of $M$, $R$, and $\theta$. The largest improvement in estimation accuracy with the BTL-Binomial over the standard ratings-only model occurs when $R$ and/or $M$ is small, which are precisely the settings of the utmost interest for large academic conferences. These results should be intuitive: When $R$ is small, each paper receives few assessments and thus the additional information from rankings is highly beneficial to accurate preference modeling. When $M$ is small, ties will be common in ratings and lead to haphazard estimation based on random tie-breaking; rankings help to accurately break those ties. We also notice that as $\theta$ increases, so does the accuracy of the BTL-Binomial model. This is because higher $\theta$ means that rankings will be more adherent to the true ranking $\pi_0$ on average, and thus will provide less noisy information for accurate modeling of paper quality. 

We have demonstrated that the BTL-Binomial model leads to more accurate decision-making using rankings and ratings in academic conference paper selection under realistic review settings. A key benefit is that reviewers need not assess many papers or greatly increase their workload. In fact, even with a coarse 5-point rating scale and top-4 rankings of a small number of papers, quality assessments may be made based on rankings and ratings with greater accuracy in comparison to the standard mean-ratings model. Furthermore, there becomes little need for random or subjective tie-breaking, making the work for data aggregators and conference chairs both easier and more objective.

\subsection{Proposal Selection in Grant Panel Review under Heterogeneity}\label{sect:appPanel}

Our second application is to a grant panel review administered by the American Institute of Biological Sciences (AIBS) during the 2021 season \citep{Gallo2023}. The AIBS issues a call for funding, recruits a panel of qualified reviewers, and administers the peer review process. Prior to panel discussion, reviewers are given access to the grant proposals, although do not necessarily read each of them in detail. During panel discussion, each proposal is discussed in turn and each reviewer provides a rating which reflects the overall scientific merit of each proposal using the numbers between 1 (\textit{excellent}) and 5 (\textit{poor}) in single decimal point increments (which we transform to the integers between 0 and $M=40$). After discussion, each reviewer provides a top-6 ranking of their overall preferred proposals. The ranking is not required to align with the reviewer's ratings. The AIBS would like to know if there are distinct preference groups among reviewers, what those preferences are, and know how much uncertainty exists in the estimated proposal quality assessments.

Some rankings and ratings are missing due to conflicts of interest and other reasons unrelated to proposal quality (e.g., intermittent distractions, internet connectivity issues, lack of qualification to accurately review). For missing ratings, we simply remove the corresponding Binomial components from the likelihood. If a reviewer does not provide a ranking, we remove the corresponding BTL components from the likelihood. If reviewer $i$ does not rank proposals due to a conflict of interest, these proposals are removed from his/her set of proposals $\mathcal{S}_i$. On the basis of Luce's Choice Axiom, estimation of model parameters corresponding to proposals with which a reviewer has a conflict of interest is not affected.

\subsubsection{Exploratory Analyses}

We study a panel with $I=17$ reviewers and $J=25$ proposals. Figure \ref{fig:aibs_EDA} displays boxplots of ratings and bar charts of rankings, by proposal. 
The mean and variance of ratings given to each proposal highly vary. For rankings, 14 of the 25 proposals are included in at least one judge's top-6 ranking.
\spacingset{1.1}
\begin{figure}[h!!]
    \centering
    \includegraphics[width=\textwidth]{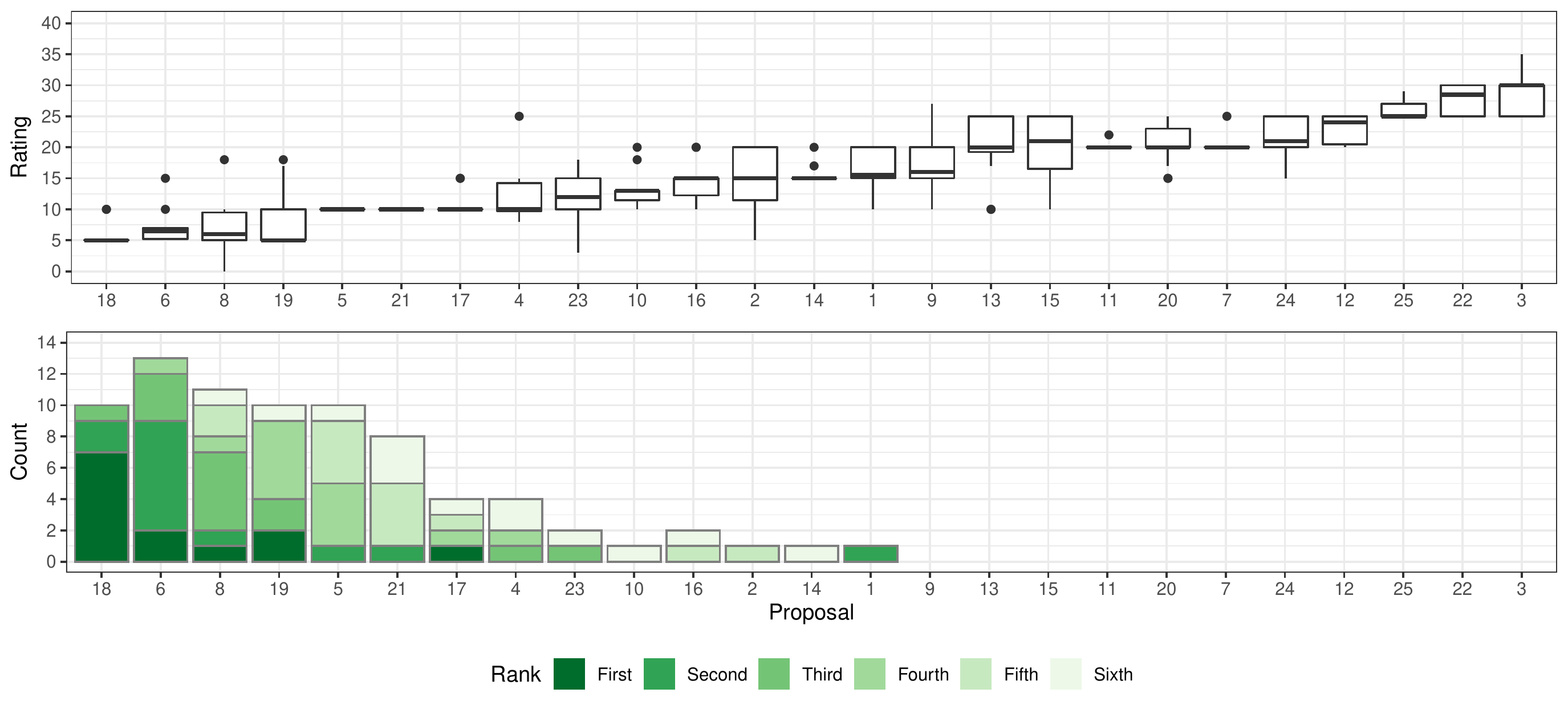}
    \caption{Boxplots of ratings (\textit{top}) and stacked bar charts of ranks (\textit{bottom}) by proposal.}
    \label{fig:aibs_EDA}
\end{figure}
\spacingset{1.5}

Discrepancies between quality assessments via ratings and rankings by the same judge were common: only one judge provided internally consistent ratings and rankings, and two judges (8 and 12) provided highly inconsistent ratings and rankings. This aligns with psychological research that the two assessments rely on different cognitive processes \citep{goffin2011all}. This suggests that rankings can provide additional and unique information.

\subsubsection{Estimation and Results}

We now fit a BTL-Binomial MFM model to the AIBS data. We set priors as follows: Given the small sample size, we choose $\lambda = 1$, $\xi_1=2$, and $\xi_2=3$ to assign prior weight primarily to 
$K^+\in\{1,2,3\}$.
We set $a=2.50$ and $b=3.77$ using an empirical Bayes approach. 
We set $\gamma_1=10$ and $\gamma_2=0.5$ to provide substantial weight to values of $\theta\in[5,35]$. 
Further information on model estimation and assessment is provided in the Appendix.

Figure \ref{fig:aibs_results} displays model results. 
The top-left panel displays the posterior of $K^+$, which provides strong evidence for a 2-class model.  Thus, we display results conditional on $K^+=2$ in what follows. 
The top-right panel displays estimated class membership probabilities by judge. Class 1 (red) includes 16 judges and class 2 (blue) includes just one. The bottom panel displays the posteriors of preference parameters for each proposal and class. Class 1 prefers proposals 18, 6, 8, and 19, in that order, and exhibits relatively strong consensus. Alternatively, class 2 largely reflects judge 8. The high levels of uncertainty in class 2's preference parameters reflects that it is comprised of a single reviewer.
Given the ``outlier'' judge,
the funding agency may wish to consider what made judge 8 provide such unique preferences and decide if those warrant separate consideration, or if the results from class 1 should be considered alone in making funding decisions.
\spacingset{1.1}
\begin{figure}[h!!]
    \centering
    \includegraphics[width=\textwidth]{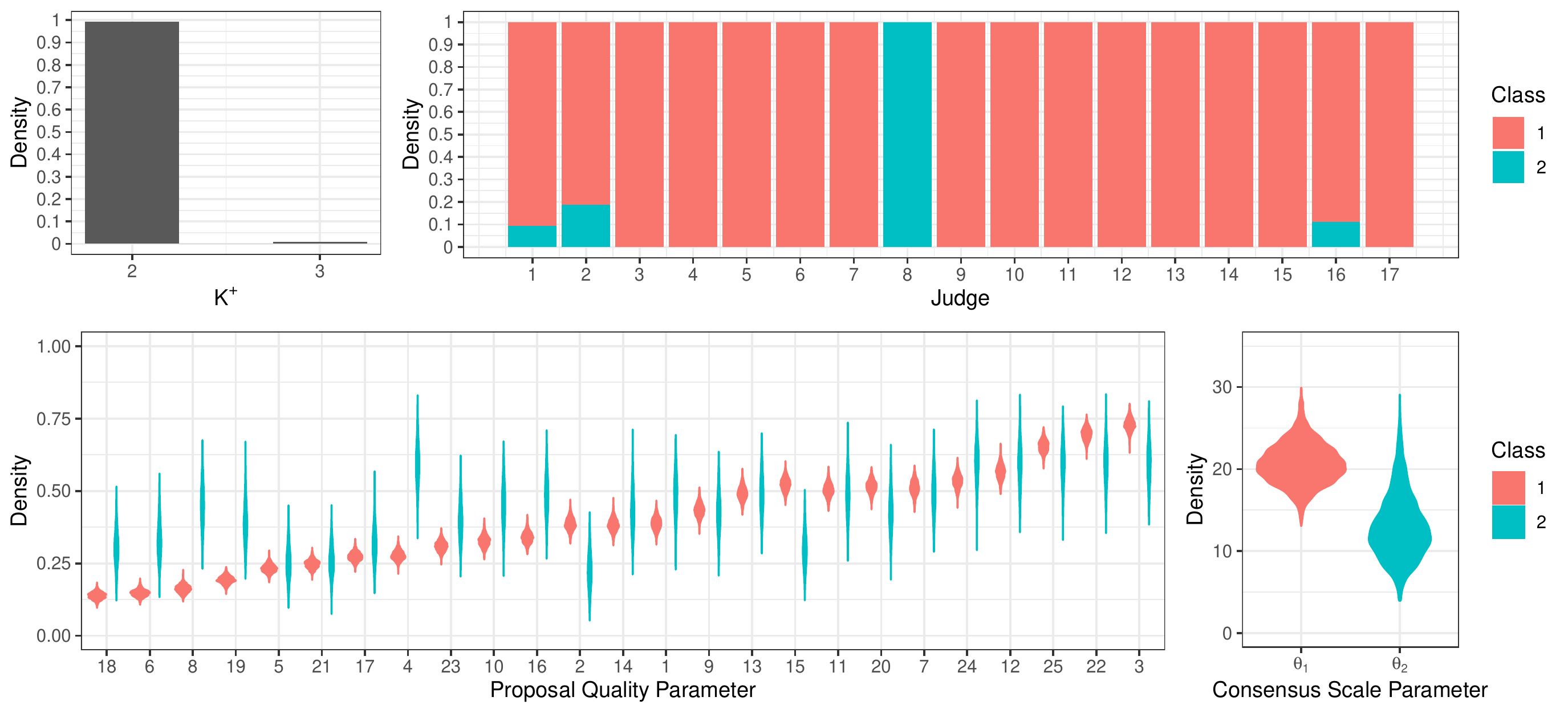}
    \caption{Posterior summaries of $K^+$ (\textit{top-left}); class membership probabilities given $K^+=2$ (\textit{top-right}); and class-specific preference parameters given $K^+=2$ (\textit{bottom}).}
    \label{fig:aibs_results}
\end{figure}
\spacingset{1.5}

\subsection{Modeling Survey Preference Data under Heterogeneity}\label{sect:appSushi}

Our final application is to survey data on the sushi preferences of $I=5,000$ Japanese adults \citep{kamishima2003nantonac}. Both rankings and ratings were collected as part of the survey. Respondents were first asked to rank a collection of $J=10$ sushi types from best to worst. The sushi types were fatty tuna, tuna, shrimp, tuna roll, sea eel, salmon roe, squid, egg, sea urchin, and cucumber roll. Each respondent was shown the same collection of sushi types and provided a complete ranking. Second, the respondents were asked to rate the sushi types using a 5-point integer scale at will, coded from 0 (best) to $M=4$ (worst). Each respondent generally provided only a few ratings. Our goal is to probabilistically model the amount and type of heterogeneity in sushi preferences among survey respondents.

The present survey dataset is uniquely positioned for analysis via a joint ranking and rating model: While rankings are complete, they lack granularity; the ratings provide granularity but have a very high rate of missingness. For the purpose of understanding heterogeneity among judges, using both rankings and ratings may be especially helpful for accurate inference on preferences.

\subsubsection{Exploratory Analyses}

Figure \ref{fig:sushi_eda} displays ratings (top) and rankings (bottom) by sushi type. For ratings, most sushi types have unimodal distributions with right skew. However, cucumber roll has a unimodal distribution centered at the middle rating while sea urchin has a bimodal distribution with peaks at the best and worst ratings. We can also see relative differences in the number of ratings each sushi type received. For example, the lower density of points for tuna roll and cucumber roll implies fewer respondents rated these types. 74.22\% of ratings are missing. For rankings, fatty tuna was ranked first by approximately one-third of respondents, while cucumber roll was ranked last by approximately one-third. Ties were not allowed in rankings. Consequently, we observe more demarcation in ranking distributions than in ratings. For example, fatty tuna and tuna have similar rating distributions but far more respondents ranked fatty tuna in first place than tuna.

\spacingset{1.1}
\begin{figure}[h!]
    \centering
    \includegraphics[width=\textwidth,height=300pt]{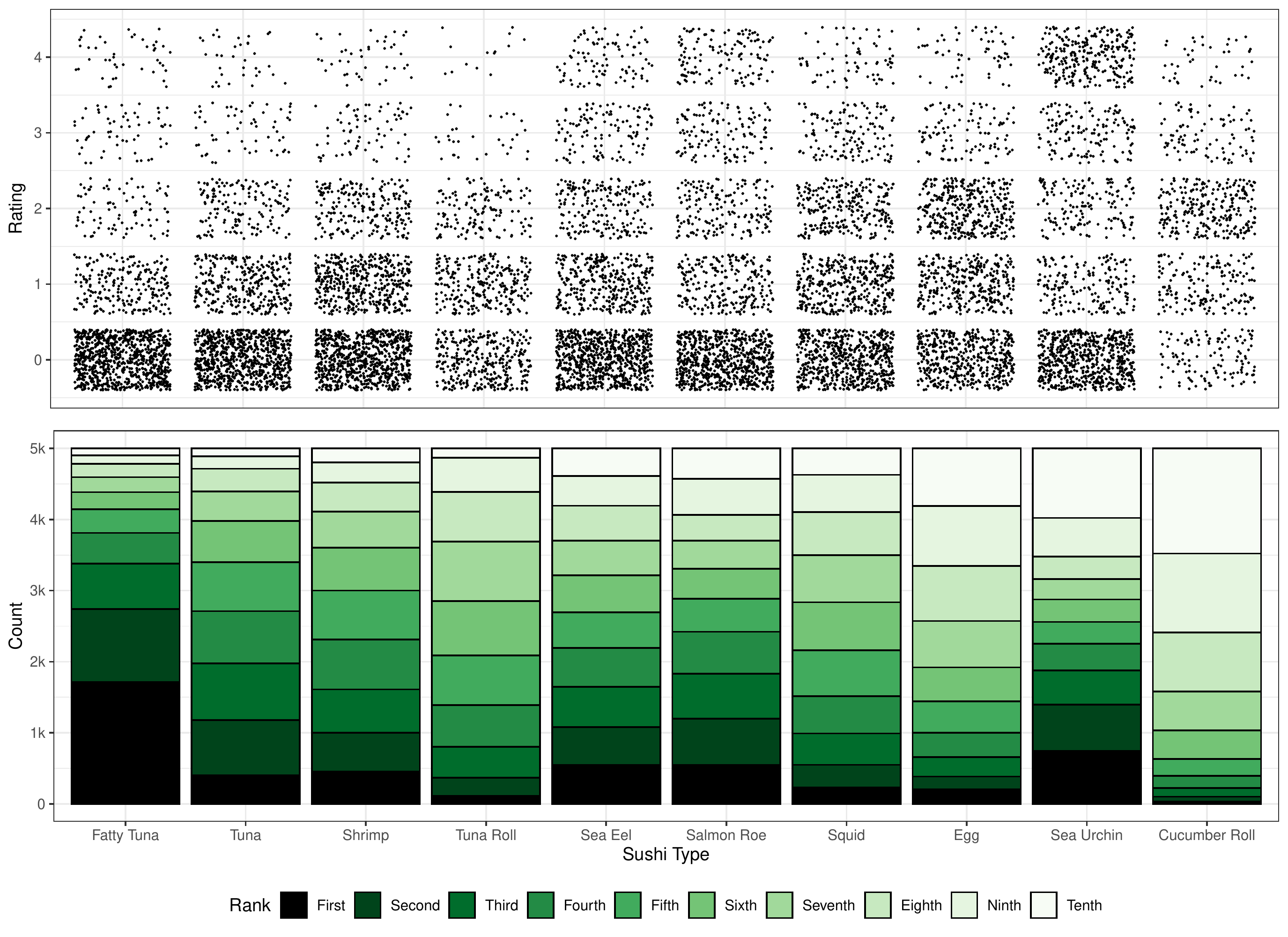}
    \caption{Ratings (\textit{top}) and stacked bar charts of rankings (\textit{bottom}) by sushi type.}
    \label{fig:sushi_eda}
\end{figure}
\spacingset{1.5}


\subsubsection{Estimation and Results}

We now fit a BTL-Binomial MFM model to the sushi data. We set priors as follows: To aid interpretability given the large sample size, we assign prior weight to moderate number of classes using $\lambda = 7$, $\xi_1=2$, and $\xi_2=3$. 
We set $a=0.26$ and $b=0.77$ using an empirical Bayes approach. We set $\gamma_1=20$ and $\gamma_2=1$ to provide substantial weight to $\theta\in[10,30]$. Further information on model estimation and assessment is provided in the Appendix.

Results show high posterior probability on $K^+=8$, indicating that 8 heterogeneous preference classes exist among the respondents.
Table \ref{tab:sushi} summarizes the estimated classes conditional on $K^+=8$. Each row contains the posterior mean population proportion $\hat\pi_k$, top-3 sushi preferences, and posterior mean consensus scale parameter $\hat\theta_k$ for each estimated class (rows ordered by $\hat\pi_k$). Classes 1 and 2 are the largest in size and exhibit the highest consensus. Thus, these classes may reflect reasonably homogeneous plurality classes. The remaining classes represent smaller proportions of the survey respondents and exhibit relatively weak consensus. Still, we may think of these classes as representing subgroups in the population that are present but less well-defined.
\spacingset{1.1}
\begin{table}[ht]
\centering
\begin{tabular}{cclc}
Class & $\hat\pi_k$ & Top-3 Sushi Preferences & $\hat\theta_k$ \\ 
  \hline
  1 & 0.20 & Sea Urchin, Fatty Tuna, Salmon Roe & 10.71 \\ 
    2 & 0.17 & Fatty Tuna, Tuna, Shrimp & 15.11 \\ 
    3 & 0.16 & Sea Eel, Fatty Tuna, Sea Urchin & 6.76 \\ 
    4 & 0.12 & Fatty Tuna, Tuna, Sea Eel & 4.39 \\ 
    5 & 0.10 & Salmon Roe, Sea Eel, Shrimp & 4.20 \\ 
    6 & 0.10 & Salmon Roe, Fatty Tuna, Tuna & 4.33 \\ 
    7 & 0.10 & Squid, Shrimp, Tuna & 3.47 \\ 
    8 & 0.05 & Egg, Shrimp, Cucumber Roll & 2.77
\end{tabular}
\caption{Posterior summaries of preference classes conditional on $K^+=8$} \label{tab:sushi}
\end{table}
\spacingset{1.5}

\section{Discussion}\label{sect:discussion}

In this paper, we propose a statistical model for joint analysis of rankings and ratings under heterogeneity, the BTL-Binomial MFM model. The model is quite flexible in several important ways. First, it  allows for analyzing ranking or preference data of various types -- pairwise comparisons, partial rankings, as well as complete rankings -- jointly with ratings. Second, the model allows for incomplete designs (or structural missingness) where each judge by design would only be assigned to review certain objects and not others. Such ``separate ballots" could arise from either conflicts of interest or selective assignments to reflect expertise or manage judges' workload. The model can also accommodate missing at random data where rankings or ratings are missing due to circumstances unrelated to the quality of objects being assessed, such as reviewer fatigue or cases when subsets of reviewers only rate or only rank the objects.
Third, the model allows for reviewer rankings to be inconsistent with their own ratings, which often happens in practice. 
Fourth, BTL-Binomial formulation satisfies Luce's Choice Axiom and the related independence from irrelevant alternatives criterion  which is a desirable property in social choice theory \citep{arrow1950difficulty}.
Fifth, the model 
simultaneously estimates both the number of heterogeneous ideologies and the specific preferences of each group, as well as the associated uncertainty. 

The BTL-Binomial MFM model makes few parametric assumptions on the reviewers, objects, and data. The model assumes that, for reviewers in each ideological class, proposals have a true underlying quality that can be measured on the unit interval. Rankings and ratings must reflect random deviations from the assumed truth. Rankings must arise from the Bradley-Terry-Luce (BTL) family of ranking distributions that
allows for a variety of ordinal data types and has been used in a large variety of application areas (see Section \ref{sect:related}). On the other hand, ratings are assumed to be Binomial. This assumption is appropriate for integer-valued ratings arising from  
an ordinal and equally-spaced set with minimum and maximum allowable values. The mean-variance relationship imposed by the Binomial can be tested for validity after model estimation (e.g., see the Appendix). If this assumption is not met, it may be indicative of an incorrectly estimated number of heterogeneous preference groups.  We have proposed sensible goodness-of-fit criteria for assessing the parametric assumptions imposed by the model.

We fit the model by adapting the telescoping sampler of \cite{fruhwirth2021generalized}, which provides computationally efficient Bayesian estimation and a natural approach to uncertainty quantification. Furthermore, we provide algorithms for posterior sampling and MAP estimation under a fixed and pre-specified number of heterogeneous preference ideology classes, $K$.

The MFM approach was chosen to estimate heterogeneity in preferences among judges. We find the Bayesian framework of MFM models to be attractive for three principal reasons. First, Bayesian estimation allows for the incorporation of prior knowledge into the estimation procedure. This is useful in the case of limited preference data, which is common to many applications. When no prior knowledge is available, flat and/or minimally informative priors are available for all model parameters. Second, Bayesian estimation provides a unifying framework for obtaining uncertainty estimates, which is a key component of our work. Third, the proposed estimation procedure may actually reduce computation time when compared to frequentist procedures, due to the fact that $K$ is estimated simultaneously with model parameters and therefore removes the need to repeatedly fit models with different values of $K$. Additionally, analytic uncertainty results are unavailable for the BTL-Binomial model (and the related Mallows-Binomial model) in the frequentist setting and therefore require the bootstrap, which can be extremely computationally burdensome \citep{pearce2022unified}.
This is avoided by the present Bayesian approach where uncertainty estimation is natural. We note that another reasonable approach would have been to use a Dirichlet Process Mixture \citep{escobar1995bayesian}. However, it may be computationally difficult to estimate in high dimensions due to estimation via reversible jump MCMC and is inconsistent for the true number of latent classes \citep{miller2018mixture}. Given that we are interested not only in density estimation, but also in the latent classes themselves, the MFM approach is more suitable.

Two recent models for rankings and ratings may be directly compared to the BTL-Binomial MFM. First, the Mallows-Binomial is a joint statistical model for rankings and ratings that combines a Mallows ranking distribution with independent Binomial rating distributions \citep{pearce2022unified}. Unlike the BTL-Binomial MFM, the Mallows-Binomial allows only for partial or complete rankings, does not satisfy Luce's Choice Axiom, cannot easily handle separate ballots, and cannot estimate heterogeneity directly.
Furthermore, Mallows-Binomial is estimated in a frequentist framework, which is computationally slow when the number of objects is large or when uncertainty estimates are desired, which requires the bootstrap. As a result, the BTL-Binomial MFM is much more flexible while still providing a unified and statistical approach to preference modeling with rankings and ratings. A second comparable work is \cite{liu2022integrating}, who propose a non-parametric algorithm for integrating rankings into ratings. That algorithm is not statistical and does not yield a preference ordering, but instead returns a ``de-quantized" score for each judge and object in a fully data-driven approach. De-quantized scores may be useful and practical for decision-making, but they do not allow for estimation of (heterogeneous) preferences or their inherent uncertainty. \cite{liu2022integrating} also assume internal consistency between rankings and ratings, which is not practical in the motivational settings described herein.

To demonstrate the utility and benefits of modeling preferences with both rankings and ratings, we analyzed three datasets motivated by real-world settings. The first was a paper selection dataset from a hypothetical large and highly competitive academic conference, inspired by current paper review procedures in computer science conferences. Using these simulated data, we showed that incorporating rankings into the traditional rating system improves accuracy of paper selection and reduces the frequency of ties among estimated paper qualities.  These benefits exist even when a coarse 5-point rating scale is employed, and are especially apparent when each reviewer can only assess a few papers. While our simulation studies show that similar benefits can be achieved by increasing the number of papers assessed by each judge, our proposal of incorporating top-4 rankings into the analysis achieves the same benefits with little additional cognitive burden on reviewers. Thus, our analyses show that collecting top-4 rankings and using the BTL-Binomial model for estimating paper quality will make the work of paper selection by computer science conference area chairs both easier and more objective. 

The second example concerned a smaller-scale case of peer review, in which a panel assessed grant proposals using rankings and ratings. In this setting, the ability of the BTL-Binomial MFM model to flexibly handle a variety of realistic complexities was demonstrated. These include missing rankings and ratings due to conflicts of interest, reviewer fatigue, and logistical difficulties during the review panel; inconsistency between ratings and rankings at the reviewer level; and potential heterogeneity in preferences among reviewers.
We demonstrated how the BTL-Binomial MFM model naturally handles missing data and inconsistent rankings and ratings through a flexible model formulation and simultaneously estimates the number of heterogeneous preference groups among the reviewers with the overall preferences and level of consensus in each group. We identified two heterogeneous preference groups: a dominant collection of reviewers and an ``outlier" reviewer whose opinions may have otherwise unduly influenced the panel decision. However, our model could capture different types of heterogeneity, such as potential reviewer preferences for basic versus translational science that have been noted previously \citep{lee2013bias,smith2021reimagining}.
Model results may be used to communicate uncertainty in peer review quality assessment which is important for funding decisions; 
see \cite{gallo2022new} for an illustration on another AIBS panel review dataset.

The third example relates to the analysis of survey preference data, in which Japanese adults were asked to rate and rank common sushi types. In this setting, the model successfully combines rankings (which are complete but lack granularity) with ratings (which provide granularity but with a very high rate of missingness). Previous work on preference surveys has modeled heterogeneity among respondents using rankings (e.g., \cite{
Wang2017}) and ratings (e.g., \cite{patterson2002latent}). 
However, the number of mixture components is usually selected in advance or via goodness-of-fit statistics.
Here, we estimate the number of heterogeneous preference groups, as well as their population proportion, preferences, and level of consensus concurrently.

Additional research is needed on joint models for rankings and ratings. As noted previously, the specific parametric form of the rating model based on the Binomial distribution implies that ratings arise from a discrete, ordinal, finite, and equally-spaced set, and furthermore imposes a specific mean-variance relationship on the ratings. These assumptions may not be valid in some contexts, and extensions of our model to modify or relax these parametric assumptions may be useful. Furthermore, the relative influence of rankings and ratings in this model depends on the amount of available data of each type. In some cases, the ability to weight the importance of rankings and ratings during estimation may be important to some practitioners, particularly in contexts where either rankings or ratings are thought to be more relevant. In addition, the BTL-Binomial model also does not incorporate covariates or predictors, which may be of interest. The model can be extended to include covariates in a similar fashion as in BTL models \citep{tkachenko2016plackett}.

Incorporating rankings into existing decision-making processes or analyses that currently use only ratings, or vice versa, has certain benefits. From a psychological or psychometric perspective, rankings force demarcation and make explicit comparisons but are coarse and impose high cognitive load on the judges. On the other hand, ratings may provide granularity and allow for ties yet may be highly subjective or inconsistent. The BTL-Binomial MFM model provides a principled Bayesian approach for analyzing various types of ranking and rating data jointly, can account for the separate ballots problem, allows for data missing at random, and imposes minimal parametric assumptions. Furthermore, the model estimates the amount and type of heterogeneity among reviewers concurrently. Examples from three different contexts demonstrate practical applicability of this model for learning preferences. For large-scale conference review, the model provides a mechanism for tie-breaking similar-quality proposals without requiring reviewers to consider more than a few papers. In small-scale panel review, the model successfully identifies an ``outlier reviewer" and estimates the preferences of the dominant subgroup for decision-making. In survey sushi data, the model estimates heterogeneous groups of respondents with their distinct preferences, even in the presence of substantial missingness. Overall, we find the BTL-Binomial MFM to be useful and efficient in estimating heterogeneous preferences from rankings and ratings jointly.

\spacingset{1.1}
\bibliography{main}
\clearpage
\appendix
\section{Estimation Algorithms}

\subsection{BTL-Binomial MFM}
\subsubsection{Estimation Algorithm}
We now present an algorithm to estimate the BTL-Binomial MFM model based on the telescoping sampler of~\cite{fruhwirth2021generalized}. A few additional details are provided below the statement of the algorithm.

\bigbreak

\noindent \underline{\textbf{Algorithm 1: Telescoping Sampler for BTL-Binomial MFM Model}}\\
\noindent\textbf{Data:} $\Pi,X$\\
\noindent\textbf{Hyperparameters:} $\lambda,\xi_1,\xi_2,a,b,\gamma_1,\gamma_2$\\
\noindent\textbf{Parameters:} $B^\text{Gibbs},B^\text{MH},K_\text{start},\sigma^2_p,\sigma^2_\theta, \sigma^2_\gamma$

\begin{enumerate}
    \item Initialize parameters $\gamma$, $\pi_k$, $p_k$, and $\theta_k$ for $k\in\{1,\dots,K_\text{start}\}$.
    \item Repeat $B_\text{Gibbs}$ times:
    \begin{enumerate}
        \item \textit{Update $Z$ and $K^+$}:
        \begin{enumerate}
            \item Update $Z_i$, $i=1,\dots,I$, using 
            \begin{align*}
                P[Z_i=k|\pi,p,\theta,X_i,\Pi_i] &=\frac{ P[Z_i=k]P[X_i,\Pi_i|Z_i=k,p_k,\theta_k]}{\sum_{k'=1}^K P[Z_i=k']P[X_i,\Pi_i|Z_i=k',p_{k'},\theta_{k'}]}\\
                &=\frac{ \pi_kP[X_i,\Pi_i|Z_i=k,p_k,\theta_k]}{\sum_{k'=1}^K \pi_{k'}P[X_i,\Pi_i|Z_i=k',p_{k'},\theta_{k'}]}.
            \end{align*}
            \item Calculate $N_k = \sum_{i} I\{Z_i=k\}$ and $K^+ = \sum_{k=1}^K I\{N_k>0\}$, where $I\{\cdot\}$ is the indicator function. Relabel the classes such that the first $K^+$ are non-empty.
        \end{enumerate}
        \item \textit{Update (Non-Empty) Class Parameters}: For $k=1,\dots,K^+$, repeat $B^\text{MH}$ times: 
        \begin{enumerate}
            \item Sample each $p_{jk}$, $j=1,\dots,J$, via random-walk Metropolis-Hastings with proposal distribution $\text{Normal}(p_{jk},\sigma^2_p)$.
            \item Sample $\theta_k$ via random-walk Metropolis-Hastings with proposal distribution $\text{Normal}(\theta_{k},\sigma^2_\theta)$.
        \end{enumerate}
        \item \textit{Update $K$ and $\gamma$}:
        \begin{enumerate}
            \item Sample directly from $K|K^+,\gamma$ such that $K\in\{K^+,K^++1,\dots\}$ using
            $$P[K|K^+,\gamma] \propto f_K(K)\frac{K!}{(K-K^+)!}\frac{\Gamma(\gamma K)}{\Gamma(I+\gamma K)}, \ \ \ K=K^+,K^++1,\dots$$
            \item Sample from $\gamma|Z,K$ via random-walk Metropolis-Hastings with proposal distribution $\text{Normal}(\gamma,\sigma^2_\gamma)$ using the unnormalized density
    $$P[\gamma|Z,K] \propto f_\gamma(\gamma)\frac{\Gamma(\gamma K)}{\Gamma(I + \gamma K)}\prod_{k=1}^{K^+} \frac{\Gamma(N_k+\gamma)}{\Gamma(\gamma)}$$
    (\cite{fruhwirth2021generalized}, Algorithm 3).
        \end{enumerate}

    and sample $\gamma|Z,K$ via random walk Metropolis-Hastings

        \item \textit{Update Empty Classes and $\pi$}:
        \begin{enumerate}
            \item If $K>K^+$, sample $(p_k,\theta_k)$ directly from its prior for $k=K^++1,\dots,K$.
            \item Sample $\pi|K,\gamma,Z\propto\text{Dirichlet}(\gamma+N_1,\dots,\gamma+N_K)$.
        \end{enumerate}
    \end{enumerate}
\end{enumerate}

\subsubsection{Further Details of Algorithm 1}

\underline{Selection of Algorithmic Parameters:} $B^\text{Gibbs}$ is the number of times steps 2(a)-(d) will be repeated, which corresponds to the number of unique times $Z$, $K^+$, $K$, $\gamma$, and $\pi$ are re-sampled. $B^\text{MH}$ is the number of times within each Gibbs iteration that each set of class parameters $(p_k,\theta_k)$ will be re-sampled using Metropolis-Hastings. Setting $B^\text{MH}>1$ often improves algorithm efficiency since class labels tend to stabilize over the course of a chain and hence need not be resampled each time that the class parameters are updated. $K_\text{start}$ is the initial number of classes, which may be chosen to be any integer between 1 and $I$. We find that setting $K_\text{start}=I$ often yields quickly converging chains but comes with a computational cost when $I$ is large in the first few Gibbs iterations as classes collapse. Lastly, the Metropolis-Hastings proposal variance parameters $\sigma^2_p,\sigma^2_\theta$, and $ \sigma^2_\gamma$ should be chosen such that the acceptance probabilities are reasonably efficient (see \cite{gelman2013bayesian} for further information on the efficiency of Metropolis-Hastings). Tuning these parameters after a short initial test run can be useful.

\underline{Initialization:} One may initialize $\gamma$, $\pi$, $p$, and $\theta$ by sampling these quantities from their prior distributions. For greater efficiency, one may also initialize using MAP estimates for a prespecified choice of $K$.

\subsection{BTL-Binomial Estimation under Fixed $K$}

\subsubsection{Model Statement}

Suppose the following BTL-Binomial latent class mixture model, under fixed $K\geq1$:

\begin{equation}
\begin{aligned}
    \gamma &\sim f_\gamma(\cdot) & \text{$f_\gamma$ is a p.d.f. on $\mathbb{R}^+$}\label{eq:FixedK}\\
   \pi | \gamma &\sim \text{Dirichlet}_K(\gamma,\dots,\gamma)\\
    (p_k,\theta_k) &\overset{iid}{\sim} f_{p,\theta}(\cdot) & \text{$f_{p,\theta}$ is a p.d.f. on $[0,1]^J\times\mathbb{R}^+$; } k\in\{1,\dots,K\}\\
    Z_i | \pi &\overset{iid}{\sim} \text{Categorical}(\pi_1,\dots,\pi_K) & i\in\{1,\dots,I\}\\
    \Pi_i, X_i|Z_i=k,p,\theta &\overset{ind.}\sim \text{BTL-Binomial}(p_k,\theta_k)
\end{aligned}
\end{equation}
Note that the above generative model in \ref{eq:FixedK} is identical to the BTL-Binomial MFM model, less the initial sampling of $K$.  Furthermore, we assume the same priors $f_\gamma$ and $f_{p,theta}$.

\subsubsection{Estimation Algorithm}
We present an algorithm to estimate a BTL-Binomial latent class mixture model with fixed $K$.

\noindent \underline{\textbf{Algorithm 2: Sampler for BTL-Binomial Latent Class Mixture Model}}\\
\noindent\textbf{Data:} $\Pi,X,K$\\
\noindent\textbf{Hyperparameters:} $\xi_1,\xi_2,a,b,\gamma_1,\gamma_2$\\
\noindent\textbf{Parameters:} $B^\text{Gibbs},B^\text{MH},\sigma^2_p,\sigma^2_\theta, \sigma^2_\gamma$

\begin{enumerate}
    \item Initialize parameters $\gamma$, $\pi_k$, $p_k$, and $\theta_k$ for $k\in\{1,\dots,K\}$.
    \item Repeat $B_\text{Gibbs}$ times:
    \begin{enumerate}
        \item \textit{Update $Z$}:
        \begin{enumerate}
            \item Update $Z_i$, $i=1,\dots,I$, using 
            $$P[Z_i=k|\pi,p,\theta,X_i,\Pi_i] =\frac{ \pi_kP[X_i,\Pi_i|Z_i=k,p_k,\theta_k]}{\sum_{k'=1}^K \pi_{k'}P[X_i,\Pi_i|Z_i=k',p_{k'},\theta_{k'}]}$$
            \item Calculate $N_k = \sum_{i} I\{Z_i=k\}$, where $I\{\cdot\}$ is the indicator function.
        \end{enumerate}
        \item \textit{Update Class Parameters}: For $k=1,\dots,K$, repeat $B^\text{MH}$ times: 
        \begin{enumerate}
            \item Sample each $p_{jk}$, $j=1,\dots,J$, via random-walk Metropolis-Hastings with proposal distribution $\text{Normal}(p_{jk},\sigma^2_p)$.
            \item Sample $\theta_k$ via random-walk Metropolis-Hastings with proposal distribution $\text{Normal}(\theta_{k},\sigma^2_\theta)$.
        \end{enumerate}
        \item \textit{Update $\gamma$}: Sample from $\gamma|Z,K$ via random walk Metropolis-Hastings using the unnormalized density
            $$P[\gamma|Z] \propto f_\gamma(\gamma)\frac{\Gamma(\gamma K)}{\Gamma(I + \gamma K)}\prod_{k=1}^{K} \frac{\Gamma(N_k+\gamma)}{\Gamma(\gamma)}$$
            (\cite{fruhwirth2021generalized}, Algorithm 3) with proposal distribution $\text{Normal}(\gamma,\sigma^2_\gamma)$.
        \item \textit{Update $\pi$}: Sample $\pi|\gamma,Z\propto\text{Dirichlet}(\gamma+N_1,\dots,\gamma+N_K)$.
    \end{enumerate}
\end{enumerate}
\bigbreak

\subsection{BTL-Binomial MAP Estimation}

In this section, we present a Bayesian Expectation Maximization (EM) algorithm to obtain maximum \textit{a posteriori} (MAP) estimates of a BTL-Binomial latent class mixture model under fixed $K$. 

\subsubsection{Data Likelihood}

To aid intuition, we state the loglikelihood after augmentation with latent classes and the objective function we seek to maximize: Assume the model presented in Equation \ref{eq:FixedK}. Let $z_{ik}=I\{Z_i=k\}$, where $I(\cdot)$ is the indicator function. Then, the loglikelihood of the preference data augmented with class indicators $Z$ can be written,
\begin{align*}
    \mathcal{L}&(\Pi,X,Z|\pi,p,\theta)\\
    &= \prod_{i=1}^I\prod_{k=1}^K\Bigg(\pi_k \times \prod_{r=1}^R \frac{e^{-\theta_k p_{\pi_i(r)k}}}{\sum_{j\in\mathcal{S}} e^{-\theta_k p_{jk}}-\sum_{s=1}^{r-1} e^{-\theta_k p_{\pi_i(s)k}}} \times \prod_{j=1}^J {M\choose x_{ij}}p_{jk}^{x_{ij}}(1-p_{jk})^{M-x_{ij}}\Bigg)^{z_{ik}}.
\end{align*}

We seek MAP estimates $(\hat\gamma,\hat\pi,\hat p,\hat\theta)$ defined as,
\begin{align}
    \hat\gamma,\hat\pi,\hat p,\hat\theta &= \underset{\gamma,\pi,p,\theta}{\arg\max}\Big( \log\mathcal{L}(\Pi,X,Z|\pi,p,\theta) + \log f(\gamma,\pi,p,\theta)\Big)\label{eq:em_objective},
\end{align}
where $f(\gamma,\pi,p,\theta)$ specifies the joint density of the prior distributions on $\gamma$, $\pi$, $p$, and $\theta$. 

\subsubsection{EM Algorithm}

Given the unknown latent class indicators $Z$, we use a Bayesian EM algorithm that is defined by the standard Expectation (E) and Maximization (M) steps, iterated until convergence of the objective function in Equation \ref{eq:em_objective}. The algorithm is presented below:

\noindent \underline{\textbf{Algorithm 3: MAP estimation of BTL-Binomial Latent Class Mixture Model}}\\
\noindent\textbf{Data:} $\Pi,X,K$\\
\noindent\textbf{Hyperparameters:} $\xi_1,\xi_2,a,b,\gamma_1,\gamma_2$\\
\noindent\textbf{Parameters:} $tol>0$
\begin{enumerate}
    \item Initialize parameters $\gamma$, $\pi_k$, $p_k$, and $\theta_k$ for $k\in\{1,\dots,K\}$. 
    \item Until the change in objective function (Equation \ref{eq:em_objective}) convergences to below $tol$,
    \begin{enumerate}
        \item \textit{E-Step}: For each $i=1,\dots,I$ and $k=1,\dots,K$, calculate the expected class membership indicators,
        \begin{align*}
            \hat z_{ik} &= \mathbb{E}_{\pi,p,\theta}[z_{ik}|\Pi_i,X_i]\\
            &= \frac{ \pi_kP[X_i,\Pi_i|Z_i=k,p_k,\theta_k]}{\sum_{k'=1}^K \pi_{k'}P[X_i,\Pi_i|Z_i=k',p_{k'},\theta_{k'}]}
        \end{align*}
        \item \textit{M-Step}: Maximize the unknown parameters sequentially:
        \begin{enumerate}
            \item Update $\pi$ subject to the constraint that $\sum_{k=1}^K\pi_k^{(t)}=1$:
           \begin{align*}
                \pi_k &= \frac{\gamma-1+\sum_{i=1}^I \hat z_{ik}}{K\gamma-K+I}.
            \end{align*}
            \item Update $\gamma$ according to,
            \begin{align*}
                \gamma &= \underset{\gamma}{\arg\max} \Big(\log f(\pi|\gamma) + \log f(\gamma)\Big)\\
                &= \underset{\gamma}{\arg\max}\Bigg(\log\Gamma(\gamma K)-K\log\Gamma(\gamma)+(\xi_1-1)\log\gamma + \gamma\Big(-\xi_2+\sum_{k=1}^K\log\pi_k\Big)\Bigg)
            \end{align*}
            via univariate numerical optimization. We use the function \texttt{optimize} in base \texttt{R} \citep{BaseR}.
            \item For each $k=1,\dots,K$, update $(p_k^{(t)},\theta_k^{(t)})$ according to,
            \begin{align*}
                (p_k^{(t)},\theta_k^{(t)})&=\underset{p_k,\theta_k}{\arg\max}\Bigg(\log f(p_k,\theta_k) + \sum_{i=1}^I \hat z_{ik}\log \Big(\text{BTL-Binomial}(\Pi_i,X_i|p_k,\theta_k)\Big)\Bigg)
            \end{align*}
            via multivariate numerical optimization. We use the function \texttt{optim} in base \texttt{R} with the method L-BFGS \citep{BaseR,byrd1995limited}.
        \end{enumerate}
    \end{enumerate}
\end{enumerate}

\subsubsection{Obtaining Frequentist Estimators}

To obtain frequentist maximum likelihood estimators, hyperparameters are available for each prior distribution. Setting $\xi_1=1$ and $\xi_2=0$ yields a flat but improper prior on $\gamma$. Alternatively, one may set $\gamma=1$ and remove step 2(b)(ii). Setting $a=b=1$ yields a flat and proper prior on each $p_{jk}$, and setting $\gamma_1=1$ and $\gamma_2=0$ yields a flat but improper prior on each $\theta_k$.

\section{Additional Application Results}\label{appendix:results}

\subsection{Paper Selection in Large Academic Conferences}

\subsubsection{Bias and Consistency of MAP Estimates}

We present the bias and consistency of MAP estimates in order to demonstrate good statistical properties of the BTL-Binomial model. We examine bias in Figure \ref{fig:simulation_accuracy} and consistency in Figure \ref{fig:simulation_error}.

\begin{figure}[h!!]
    \centering
    \includegraphics[width=\textwidth]{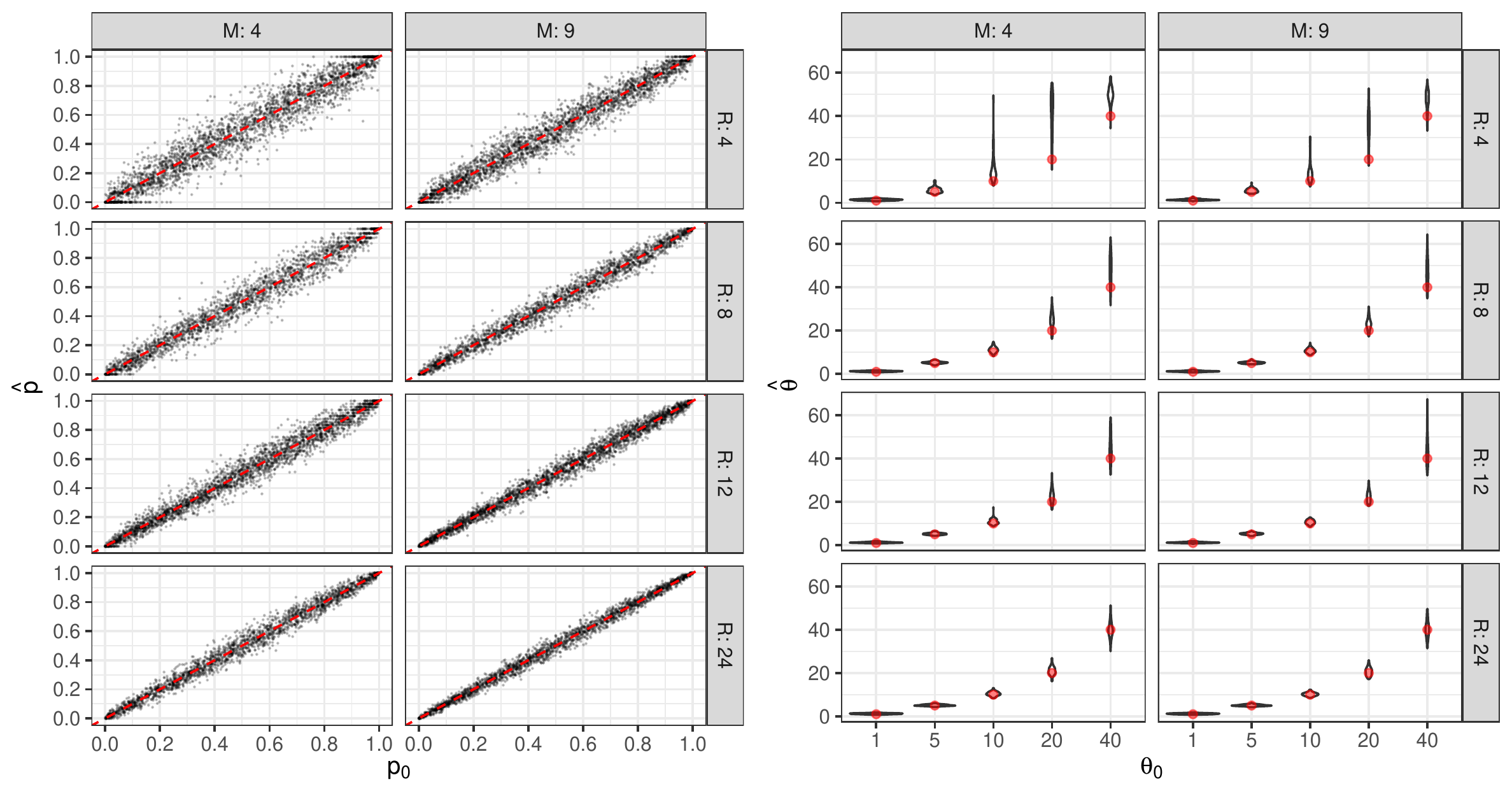}
    \caption{Scatterplot of MAP estimates $\hat p$ against true $p_0$ (left) and violin plots of MAP estimates $\hat \theta$ against true $\theta_0$ (right) under various $M$ and $R$.}
    \label{fig:simulation_accuracy}
\end{figure}
\begin{figure}[h!!]
    \centering
    \includegraphics[width=\textwidth]{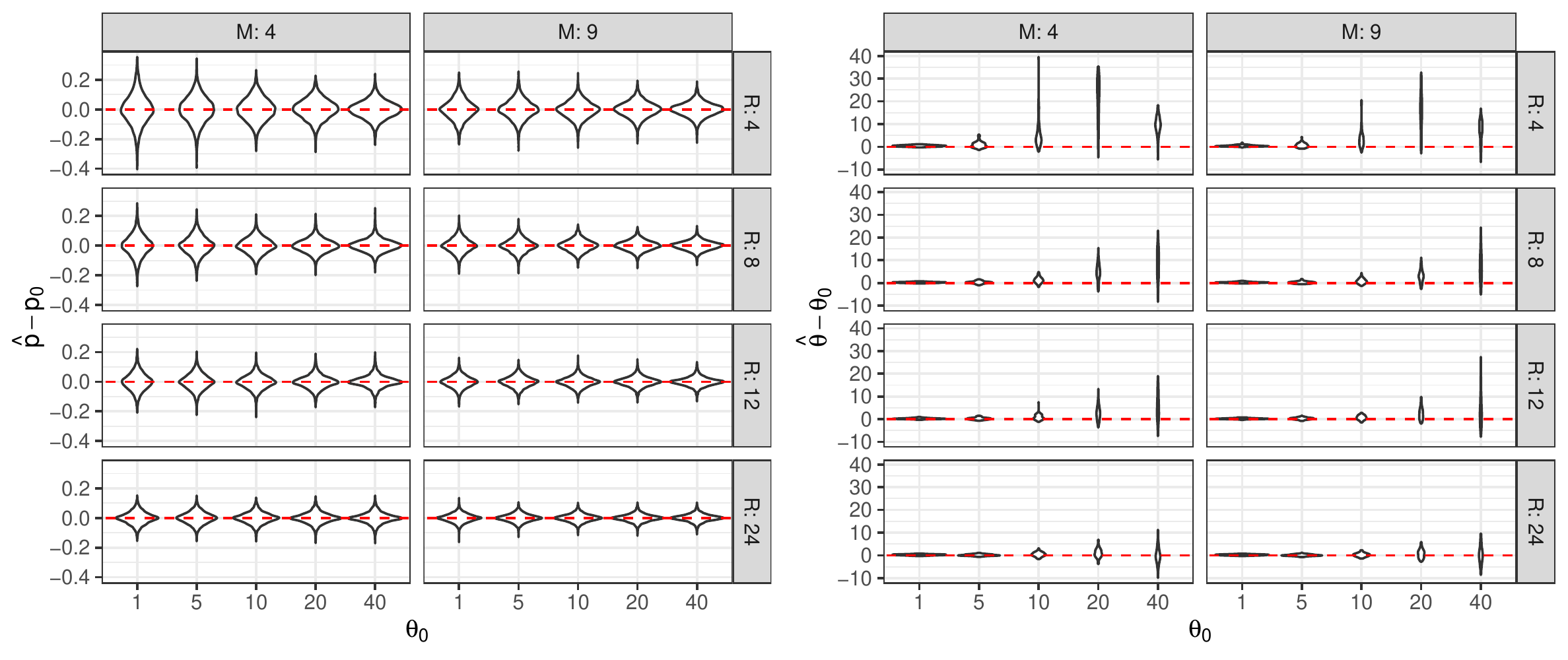}
    \caption{Violin plots of estimation error for $\hat p$, calculated $\hat p-p_0$ (left) and estimation error for $\hat\theta$, calculated $\hat\theta-\theta_0$ (right) under various $M$ and $R$. See the main body for simulation details.}
    \label{fig:simulation_error}
\end{figure}

We observe in the left panels of Figures \ref{fig:simulation_accuracy} and \ref{fig:simulation_error} that estimation of $p$ appears to be unbiased and consistent in both $M$ and $R$, regardless of $\theta_0$. Estimation of $p$ is central for understanding the quality of each paper and consensus ranking of the reviewers. Thus, the apparent unbiasedness and consistency of $\hat p$ suggest accurate estimation of group preferences regardless of $M$, $R$, and $\theta_0$. For $\theta$, the right panels demonstrate potentially biased estimation. This is unsurprising given a similar result for the corresponding parameter in the related Mallows-Binomial model \citep{pearce2022unified}. Estimation accuracy appears worst when $\theta_0$ is very large, which often leads to perfect uniformity of observed rankings. In such cases, estimation of $\theta$ is most difficult; estimation is more accurate when $\theta$ is small. We notice that $\hat\theta$ appears consistent in $R$ but not $M$, which makes sense given that $M$ only relates to the rating scale while $\theta$ is only applicable to ranking consistency. Although the potentially biased estimation of $\theta$ is disappointing, it may not have much practical impact in the present setting since it corresponds to the strength of consensus and not the relative or ordered preferences of the group, which are paramount for deciding which papers to accept to the conference.

\subsection{Proposal Selection in Grant Panel Review under Heterogeneity}

\subsubsection{Hyperparameter and Algorithm Parameter Settings}

Given the small sample size, we assign prior weight primarily to $K^+\in\{1,2,3\}$. Thus, we choose $\lambda = 1$, $\xi_1=2$, and $\xi_2=3$. The effect of these choices on the prior distribution of $K^+$ can be seen in Figure~\ref{priorKplus_AIBS}. We set $a=2.50$ and $b=3.77$ using an empirical Bayes approach, in which we fit a Beta distribution to the observed ratings after normalization to the unit interval based on maximum moment estimators of the first two moments. We set $\gamma_1=10$ and $\gamma_2=0.5$ to provide substantial weight to values of $\theta\in[5,35]$. We carry out Algorithm 1 with $B^\text{Gibbs}=1000$ and $B^\text{MH}=10$, $K_\text{start}=I=17$, $\sigma_p^2=0.05$, $\sigma_\theta^2=3$, and $\sigma_\gamma^2=.5$. The first half of the total 10,000 iterations were removed as burn-in.

\begin{figure}[h!!]
    \centering
    \includegraphics[width=\textwidth]{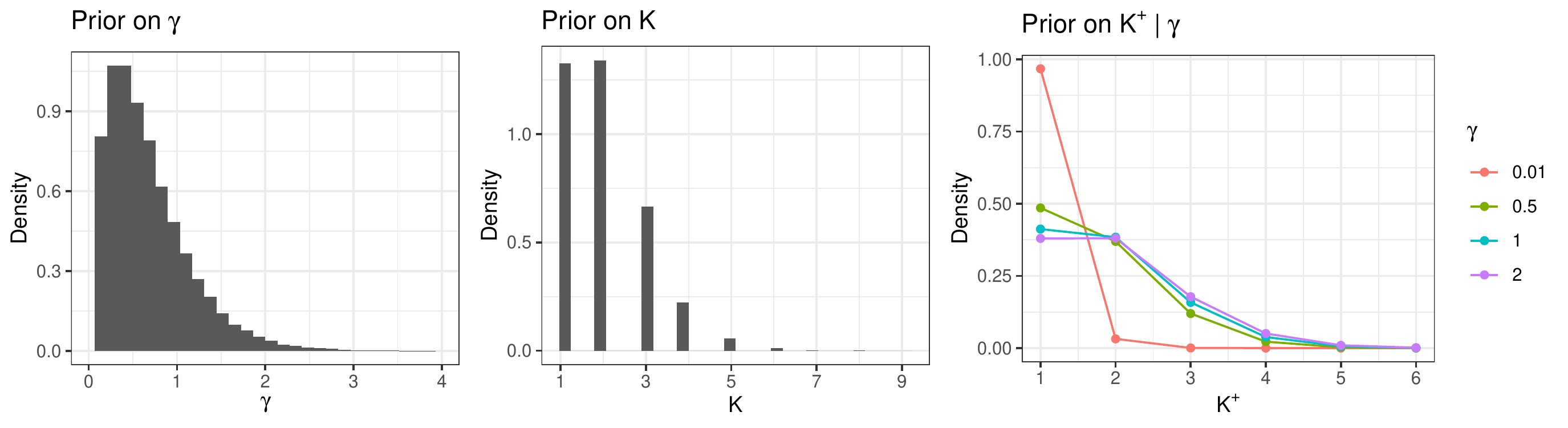}
    \caption{Prior distribution on $\gamma$, $K$, and $K^+$ given $\lambda=1$ and $\gamma\in\{0.01,0.5,1,2\}$.}
    \label{priorKplus_AIBS}
\end{figure}

\subsubsection{Further Estimation Results}

Below, we present an augmented version of Figure 3 in the main body of the paper that additionally includes a posterior summary of $\gamma$.
\begin{figure}[ht]
    \centering
    \includegraphics[width=\textwidth]{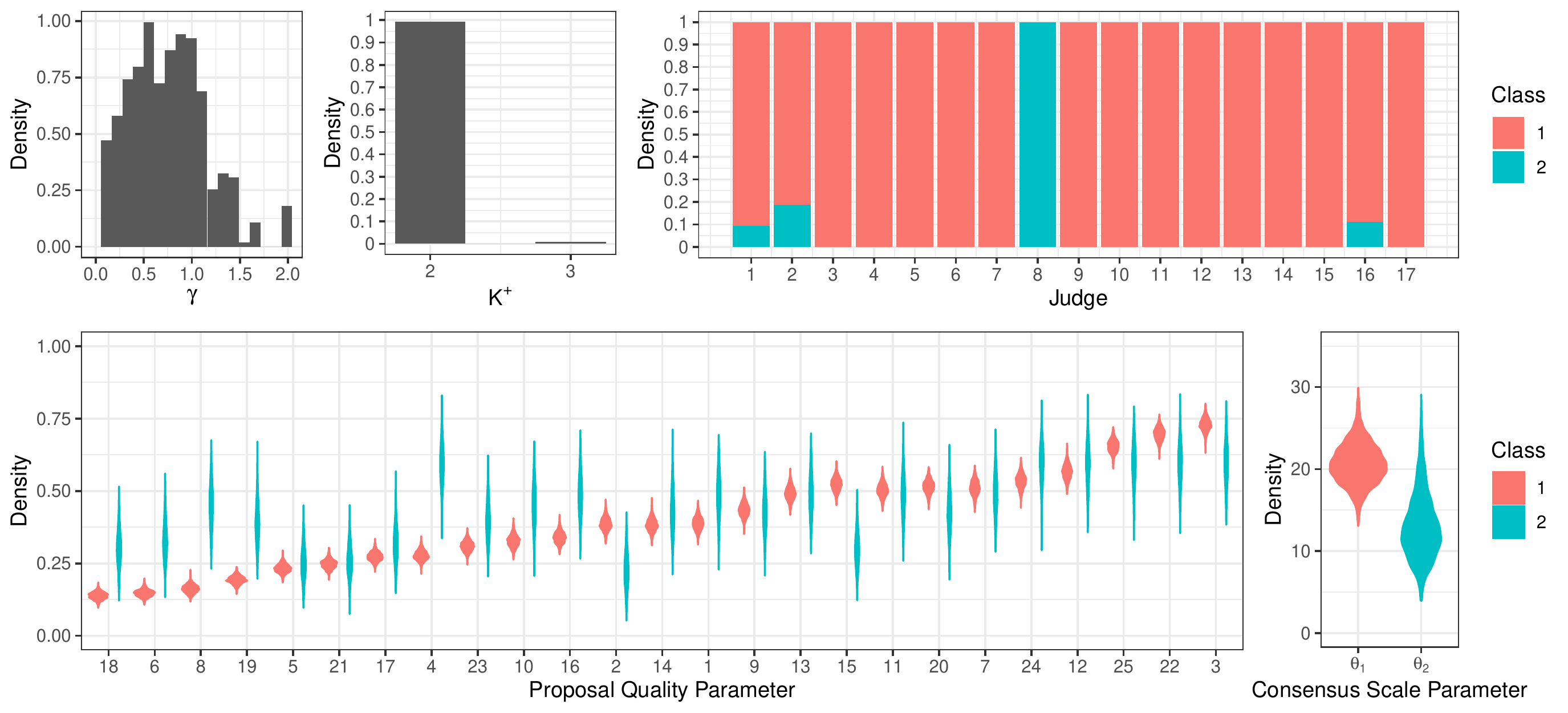}
    \caption{Posterior summaries of $\gamma$ (\textit{top-left}), $K^+$ (\textit{top-middle}), class membership probabilities given $K^+=2$ (\textit{top-right}); and preference parameters given $K^+=2$ (\textit{bottom}).}
\end{figure}

\newpage
\subsubsection{Goodness-of-Fit and Trace Plots}

Below we display goodness-of-fit and trace plots for Setting 2. We find the results to be satisfactory.

\begin{figure}[h!!]
    \centering
    \includegraphics[width=\textwidth]{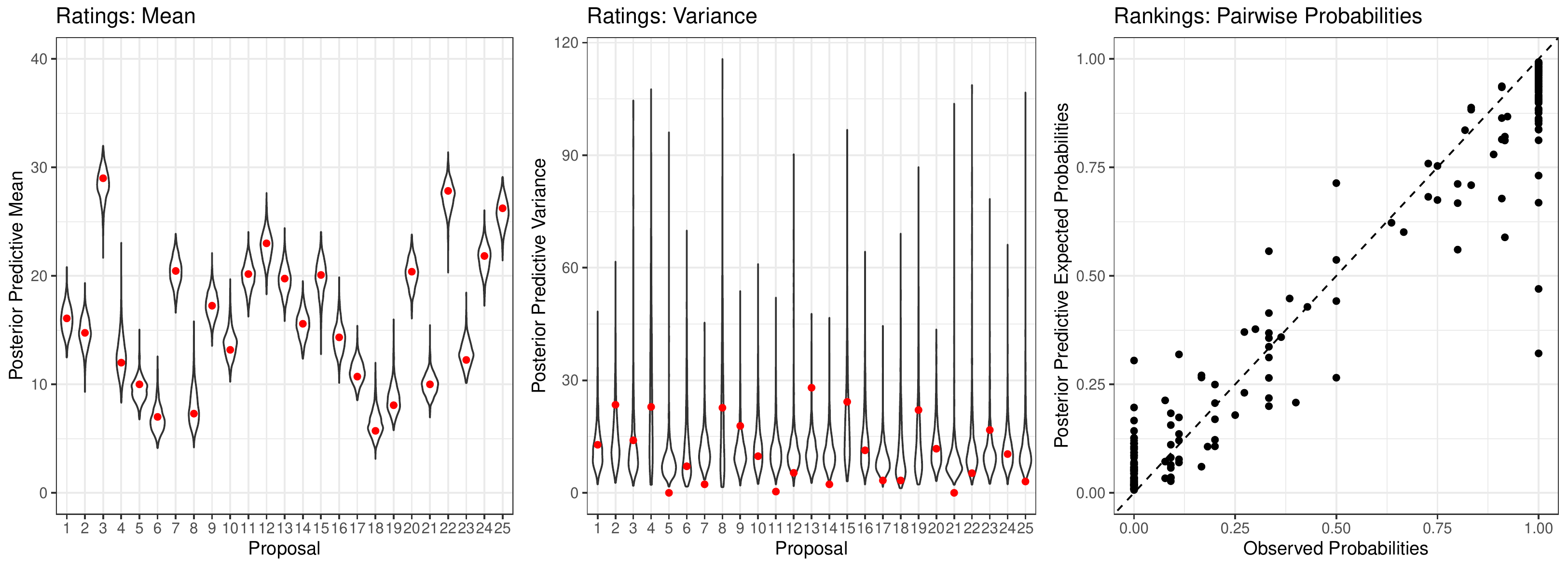}
    \caption{AIBS Goodness-of-Fit Results. Red dots represent the observed posterior mean (left) or variance (center).}
\end{figure}

\begin{figure}[h!!]
    \centering
    \includegraphics[width=\textwidth]{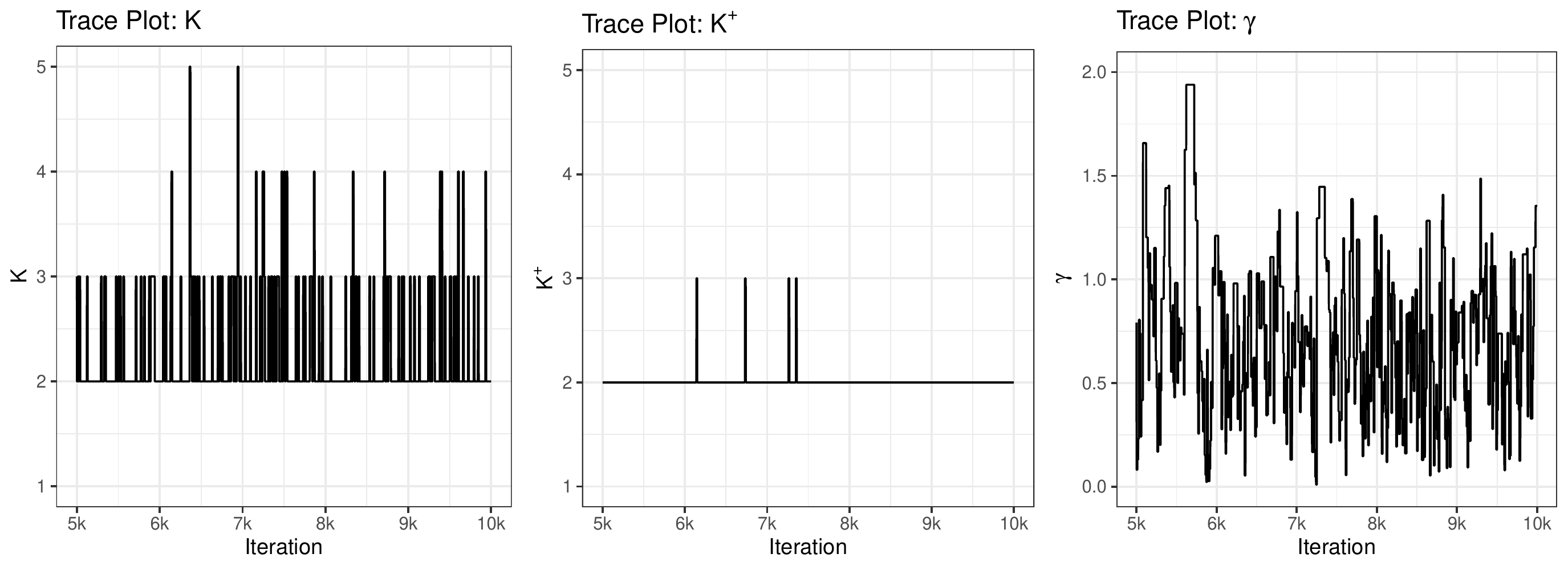}
    \includegraphics[width=\textwidth]{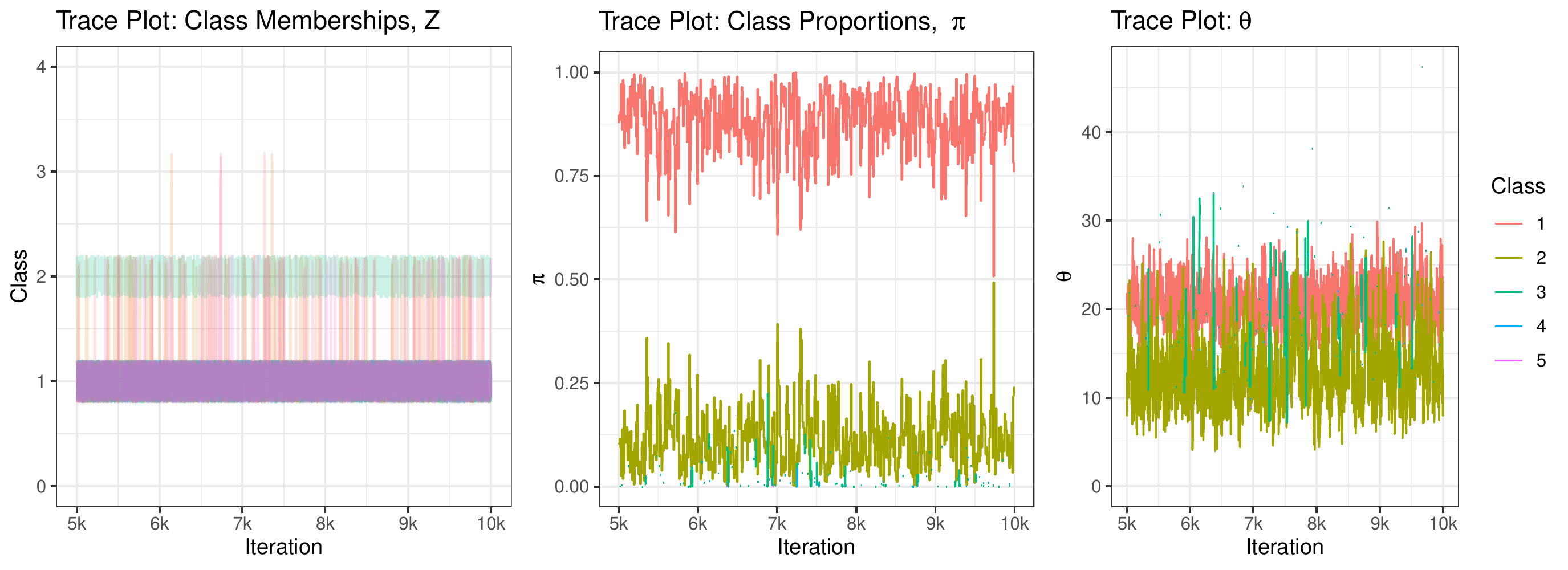}
    \caption{AIBS Trace Plots of $K$, $K^+$, $\gamma$, $Z$, $\pi$, and $\theta$}
\end{figure}
\begin{figure}[h!!]
    \centering
    \includegraphics[width=\textwidth]{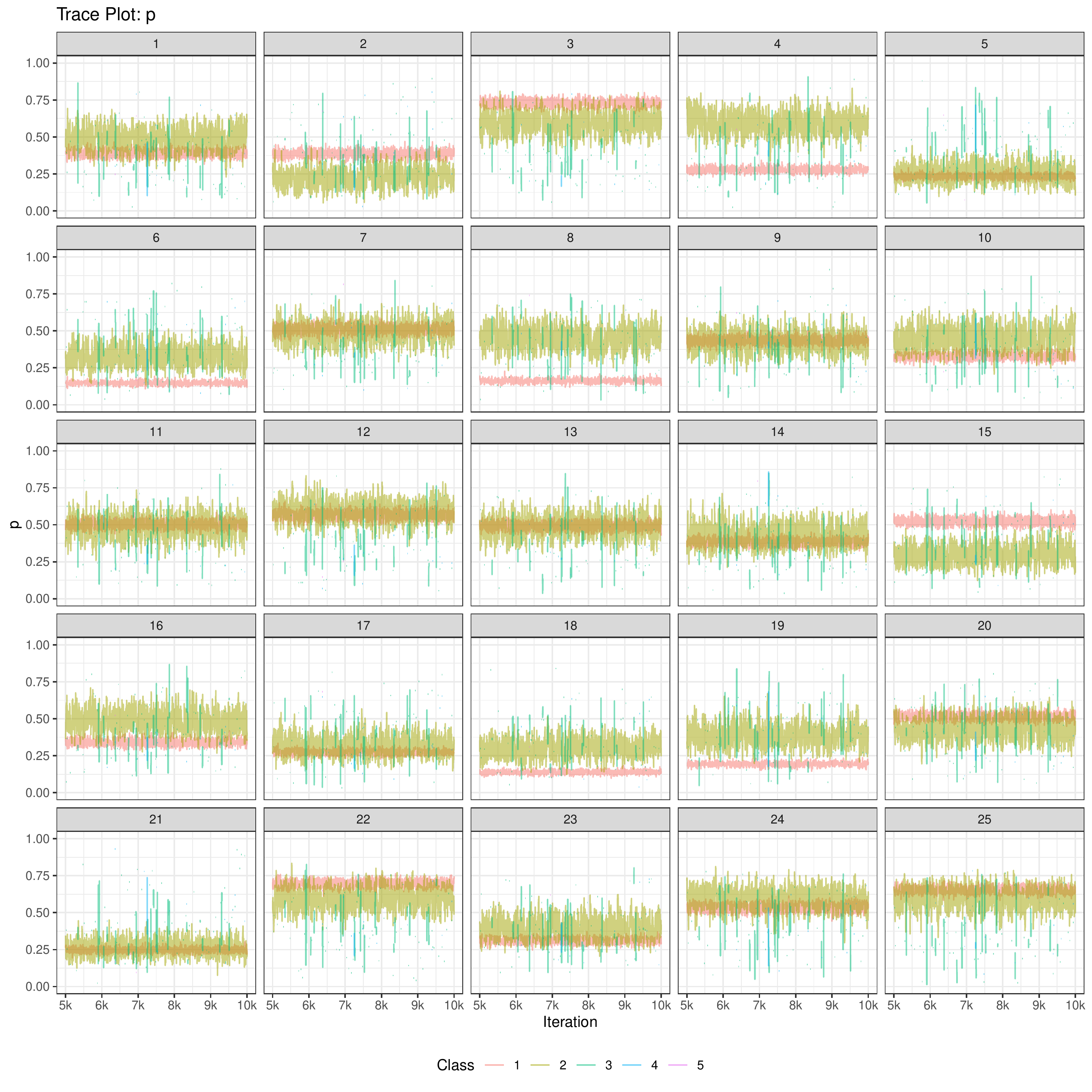}
    \caption{AIBS Trace Plots of $p$}
\end{figure}

\clearpage
\subsection{Modeling Complex Survey Data under Heterogeneity}

\subsubsection{Hyperparameter and Algorithm Parameter Settings}

To aid interpretability given the large sample size, we assign prior weight primarily to $K^+\in\{5,\dots,10\}$ using $\lambda = 7$, $\xi_1=2$, and $\xi_2=3$. The effect of these choices on the prior distribution of $\gamma$, $K$, and $K+$ can be seen in Figure \ref{priorKplus_Sushi}. We set $a=0.26$ and $b=0.77$ using an empirical Bayes approach, in which we fit a Beta distribution to the observed ratings after normalization to the unit interval based on maximum moment estimators of the first two moments. We set $\gamma_1=20$ and $\gamma_2=1$ to provide substantial weight to values of $\theta\in[10,30]$. We carry out Algorithm 1 with $B^\text{Gibbs}=2500$ and $B^\text{MH}=10$, $K_\text{start}=1$, $\sigma_p^2=0.1$, $\sigma_\theta^2=3$, and $\sigma_\gamma^2=0.3$. The first half of the total 25,000 iterations were removed as burn-in. 
\begin{figure}[h!!]
    \centering
    \includegraphics[width=\textwidth]{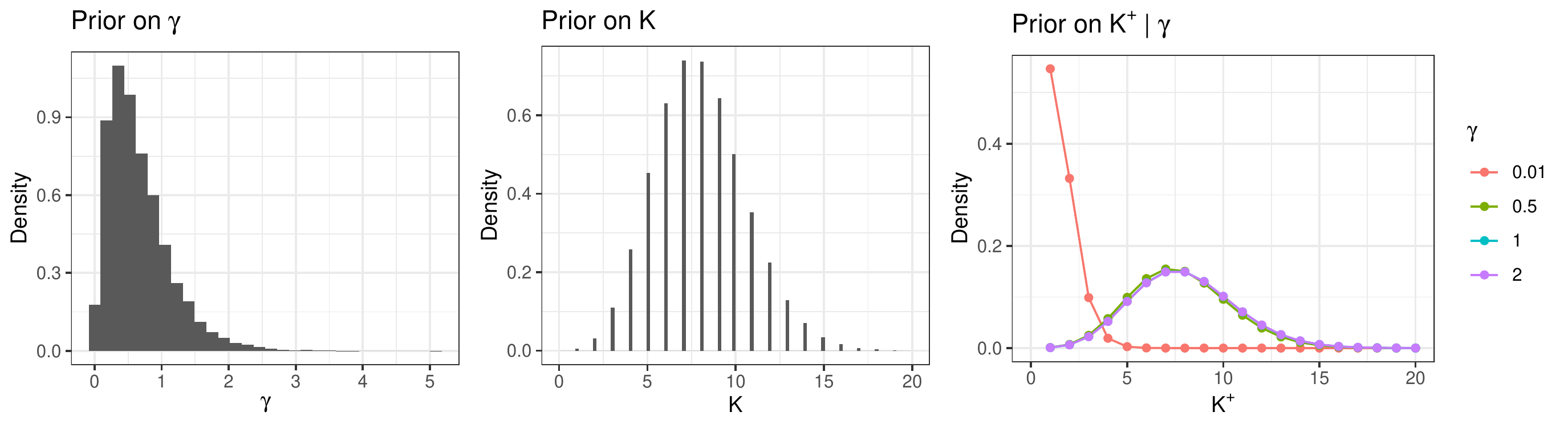}
    \caption{Prior distribution on $\gamma$, $K$, and $K^+$ given $\lambda=7$, conditional on $\gamma\in\{0.01,0.5,1,2\}$.}
    \label{priorKplus_Sushi}
\end{figure}

\subsubsection{Further Estimation Results}

First, we present posterior summaries of $K^+$, $\gamma$, and $\pi$, where classes are ordered by size. We see that an 8-class model has very high posterior probability, which leads us to present results conditional on $K^+ = 8$. 

\begin{figure}[h!]
    \centering
    \includegraphics[width=\textwidth]{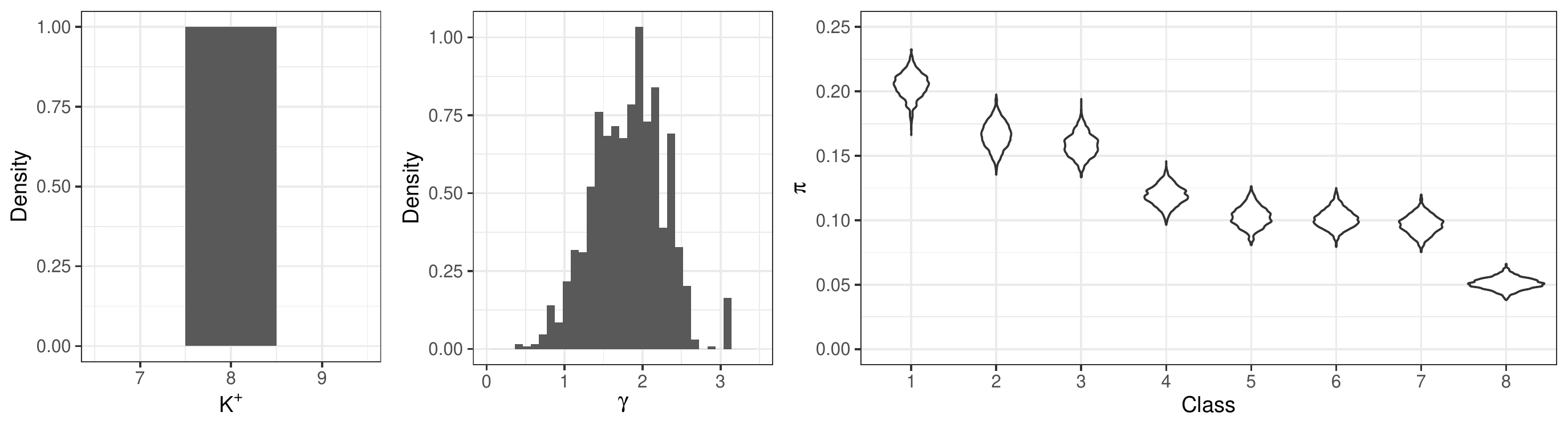}
    \caption{Posterior distributions of $K+$ (\textit{left}), $\gamma$ (\textit{center}), and $\pi$ (\textit{right}).}
    \label{fig:sushi_res1}
\end{figure}

Second, we present posterior probabilities of shared class membership across survey respondents (Figure \ref{fig:sushi_res2}). We do not display class membership probabilities by respondent due to the very large number of respondents. On the x- and y-axes are survey respondents, with order determined to keep similarly-clustered respondents together, as determined by the \texttt{salso} package \citep{salso}. The color indicates cluster similarity via the posterior probability of shared class membership (white represents low probability; black represents high probability). We notice high within-class homogeneity with respect to clustering probability and relatively strong heterogeneity between classes. We take this as evidence that the algorithm is successful at distinguishing heterogeneous groups.

\begin{figure}[ht]
    \centering
    \includegraphics[width=.9\textwidth]{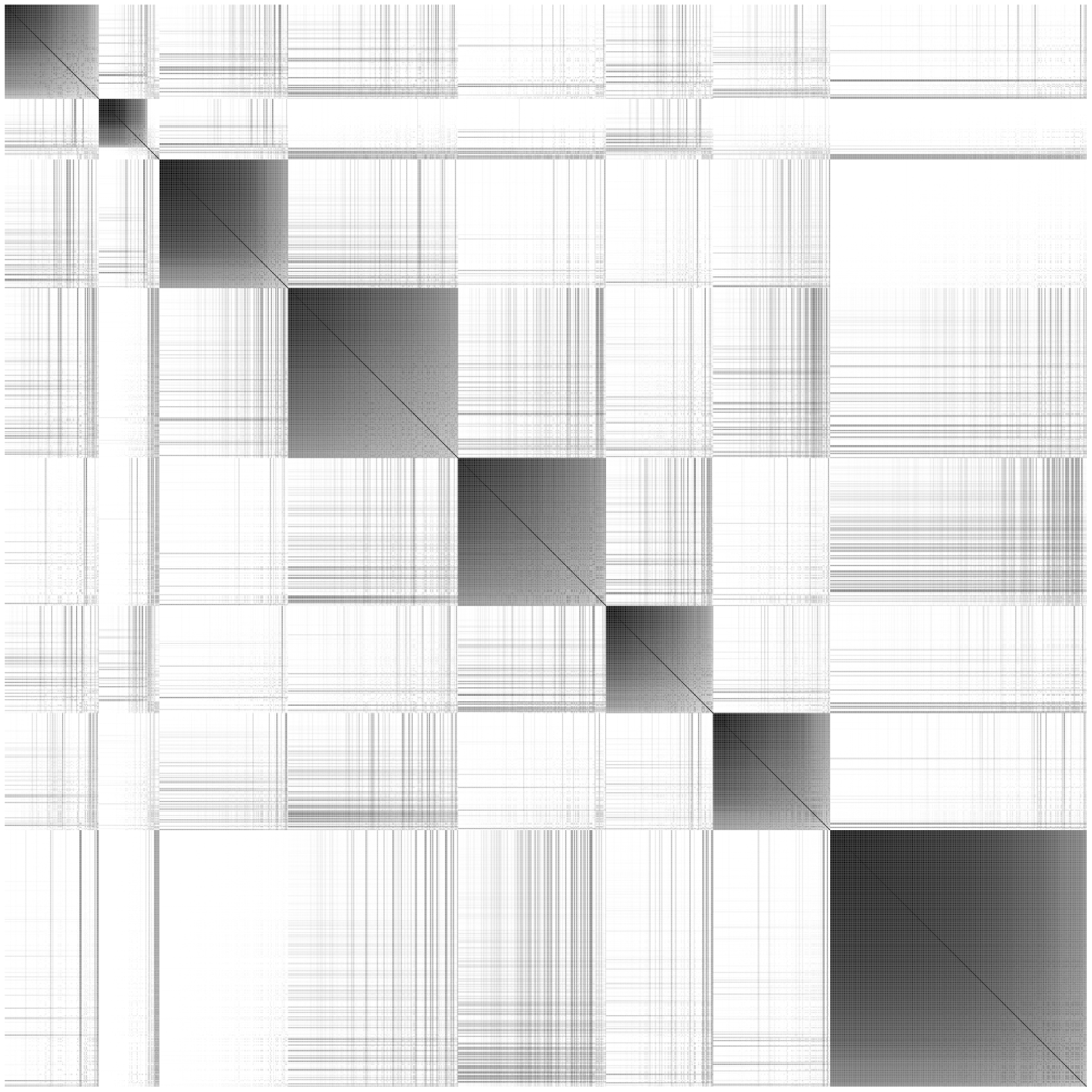}
    \caption{Similarity matrix of survey respondents by shared class membership. Survey respondents are on the x- and y-axes, ordered according to display clusters cohesively. White (black) represents low (high) posterior probability that two respondents are in the same class.}
    \label{fig:sushi_res2}
\end{figure}

\subsubsection{Goodness-of-Fit and Trace Plots}

Below we display goodness-of-fit and trace plots for Setting 3. We notice that posterior means for ratings and pairwise probabilities for rankings appear satisfactory. The posterior predictive ratings variance appears to be low in comparison to the observed ratings variance, which is likely a result of providing strong prior probability to a relatively small number of clusters. Given the focus on interpretability, we find the results to be satisfactory.

\begin{figure}[h!!]
    \centering
    \includegraphics[width=\textwidth]{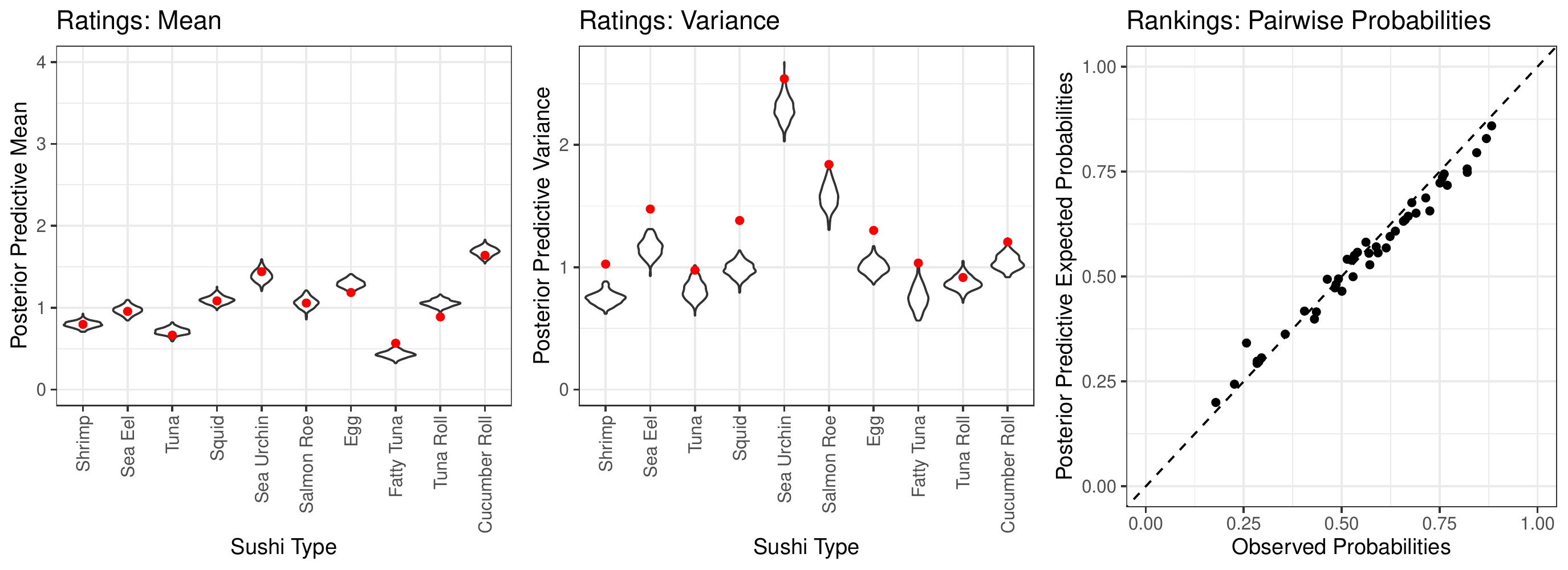}
    \caption{Sushi Data Goodness-of-Fit}
\end{figure}

\begin{figure}[h!!]
    \centering
    \includegraphics[width=\textwidth]{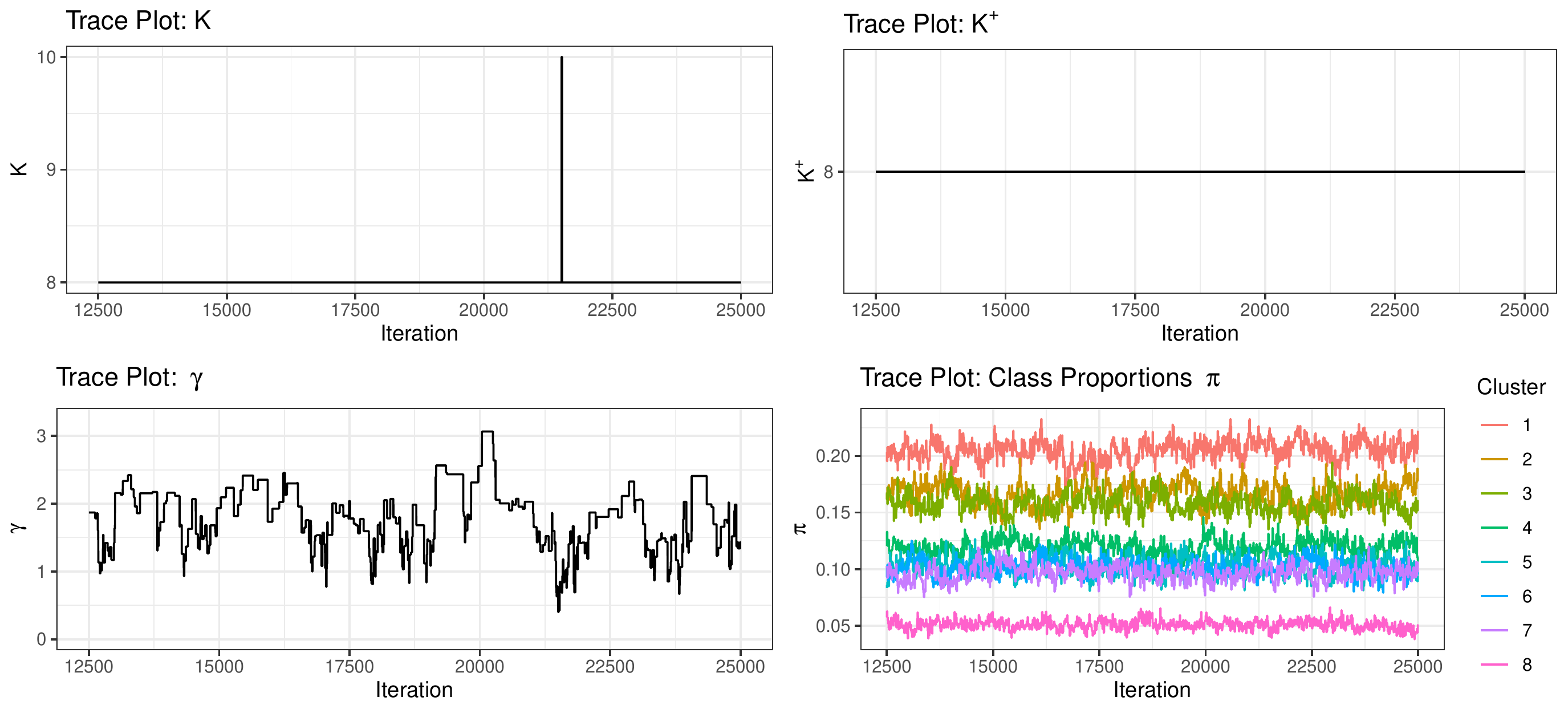}
    \caption{Sushi Trace Plots of $K$, $K^+$, $\gamma$, and $\pi$}
\end{figure}

\newpage
\section{Proof of Theorem 1}\label{appendix:proofs}

\begin{theorem}
Let $M$, $J$, and $R$ be fixed and positive integers such that $R\leq J$. Then the BTL-Binomial($p,\theta$) model is identifiable.
\end{theorem}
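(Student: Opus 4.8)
The plan is to exploit the mutual independence of the ranking and rating components asserted in~(\ref{eq:BTL-B}), which lets the joint law factor as a product of a Plackett--Luce ranking term with worths $\omega_j=\exp(-\theta p_j)$ and $J$ mutually independent Binomial rating terms. Identifiability then splits into two recovery problems that I would handle in sequence: first recover $p$ from the ratings alone, then recover $\theta$ from the rankings once $p$ is known.

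First I would suppose that $(p,\theta)$ and $(\tilde p,\tilde\theta)$ induce the same BTL-Binomial distribution and marginalize over the ranking. Because the ratings are independent with $X_j\sim\text{Binomial}(M,p_j)$, the rating marginal of object $j$ is $\text{Binomial}(M,p_j)$; since $M\geq 1$, the success probability of a Binomial with known index is pinned down by its mean (equivalently by $P[X_j=0]=(1-p_j)^M$). Equating marginals object-by-object yields $p_j=\tilde p_j$ for every $j$, so $p=\tilde p$. This is precisely the ``anchoring via the ratings'' step: unlike a bare BTL model, in which worths are identified only up to a common scale, the Binomial factor fixes each $p_j$ on an absolute scale.

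With $p=\tilde p$ in hand, I would turn to the ranking marginal, a Plackett--Luce law with worths $\exp(-\theta p_j)$. The cleanest handle is the induced pairwise (or first-place) probabilities: for any pair $A,B$ the displayed identity gives $P[A\prec B]=\bigl(1+\exp(-\theta(p_B-p_A))\bigr)^{-1}$, so equating this quantity under $(p,\theta)$ and $(p,\tilde\theta)$ forces $\theta(p_B-p_A)=\tilde\theta(p_B-p_A)$, hence $\theta=\tilde\theta$ as soon as one pair satisfies $p_A\neq p_B$. This information is available even when $R=1$, since a length-one ranking records each object's first-place probability $\omega_j/\sum_\ell\omega_\ell$, whose ratios recover $\exp(-\theta(p_j-p_\ell))$.

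The main obstacle is the degenerate configuration in which all coordinates of $p$ coincide: there the worths are identical, every ranking is equiprobable regardless of $\theta$, and the ranking component carries no information about the consensus scale, so $\theta$ cannot be recovered. The crux of a fully rigorous argument is therefore to handle or explicitly exclude this constant-$p$ case; away from it the existence of a pair with $p_A\neq p_B$ makes the final step immediate. I would either restrict the claim to non-constant $p$ (genuine heterogeneity among object qualities, the only case of practical interest) or state that $\theta$ is identifiable exactly when $p$ is not constant, the exceptional set of constant vectors having Lebesgue measure zero in $[0,1]^J$ for $J\geq 2$.
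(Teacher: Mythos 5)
Your proposal is correct, and its overall architecture matches the paper's: first use identifiability of the Binomial (with $M$ known) to force $p=\tilde p$ from the rating marginals, then argue that the ranking component pins down $\theta$. Where you differ is in how the second stage is executed, and your version is both more concrete and more careful. The paper equates the full Plackett--Luce probabilities of an arbitrary ranking $\pi$, takes logarithms to obtain the identity
\begin{equation*}
0 = \sum_{r=1}^R \Big[p_{\pi(r)}(\theta_2-\theta_1)+ \log\Big(\tfrac{\sum_{j} e^{-\theta_2 p_{j}}-\sum_{s<r} e^{-\theta_2 p_{\pi(s)}}}{\sum_{j} e^{-\theta_1 p_{j}}-\sum_{s<r} e^{-\theta_1 p_{\pi(s)}}}\Big)\Big],
\end{equation*}
and then simply asserts that this can hold for arbitrary $\pi$ only when $\theta_1=\theta_2$. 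You instead extract observable low-dimensional marginals --- first-place probabilities $\omega_j/\sum_\ell \omega_\ell$, whose ratios recover $\exp(-\theta(p_j-p_\ell))$ --- which reduces the conclusion to an elementary cancellation and works even when $R=1$.

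More importantly, your degenerate-case analysis exposes a genuine gap in the paper's own argument rather than a defect in yours. If all coordinates of $p$ coincide (or if $J=1$), the worths $e^{-\theta p_j}$ are equal, the ranking law is uniform for every $\theta$, and the displayed identity above holds for any $\theta_1,\theta_2$; the paper's closing assertion is simply false there, so the theorem as stated fails on the set of constant quality vectors. Your proposed fix --- restricting to non-constant $p$, or stating that $\theta$ is identifiable exactly when some pair satisfies $p_A\neq p_B$ (the exceptional set being Lebesgue-null in $[0,1]^J$ for $J\geq 2$) --- is exactly the qualification the theorem needs to be literally true. The only cosmetic caution is that for partial rankings of length $R<J$ the event $\{A\prec B\}$ is not always resolved by the data, so you should lean on the first-place probabilities (as you do) rather than pairwise marginals when making the argument fully rigorous.
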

\begin{proof}
Let $P_{p,\theta}$ denote the probability distribution of scores $x$ and rankings $\pi$ under a BTL-Binomial($p,\theta$) model. Let $\theta_1,\theta_2>0$ and $p_1,p_2\in[0,1]^J$ such that $P_{\theta_1,p_1}=P_{\theta_2,p_2}$. Given $M$, the standard Binomial distribution is identifiable. Thus, for $j=1,\dots,J$ and arbitrary $x_j$, we know that $P_{\theta_1,p_1}=P_{\theta_2,p_2}$ if and only if $p_1=p_2$. Continuing under this assumption, it remains to show that $P_{p_1,\theta_1}=P_{p_1\theta_2} \iff \theta_1=\theta_2$. Note that,
\begin{align*}
    P_{p_1,\theta_1}&=P_{p_1,\theta_2}\\
    \iff&\prod_{r=1}^R \frac{e^{-\theta_1 p_{1\pi(r)}}}{\sum_{j\in\mathcal{S}} e^{-\theta_1 p_{1j}}-\sum_{s=1}^{r-1} e^{-\theta_1 p_{1\pi(s)}}}=\prod_{r=1}^R \frac{e^{-\theta_2 p_{1\pi(r)}}}{\sum_{j\in\mathcal{S}} e^{-\theta_2 p_{1j}}-\sum_{s=1}^{r-1} e^{-\theta_2 p_{1\pi(s)}}}\\
    \iff& 0 = \sum_{r=1}^R \Big[p_{1\pi(r)}(\theta_2-\theta_1)+ \log\Big(\frac{\sum_{j\in\mathcal{S}} e^{-\theta_2 p_{1j}}-\sum_{s=1}^{r-1} e^{-\theta_2 p_{1\pi(s)}}}{\sum_{j\in\mathcal{S}} e^{-\theta_1 p_{1j}}-\sum_{s=1}^{r-1} e^{-\theta_1 p_{1\pi(s)}}}\Big)\Big]
\end{align*}
which for arbitrary $\pi$ will be true only when $\theta_1=\theta_2$, as desired.
\end{proof}

\section{Code and Data Access}
\label{appendix:implementation}
All algorithms described herein have been implemented in \texttt{R}. The code to implement our method and reproduce our analyses is available upon request. The data used in this paper is either publicly available or available upon request:
\begin{itemize}
    \item Setting 1: The simulated data is available upon request.
    \item Setting 2: The peer review data used in this study is publicly available in \cite{Gallo2023} or at \url{https://doi.org/10.6084/m9.figshare.19692223.v1}.
    \item Setting 3: The sushi preference data used in this study is publicly available in \cite{kamishima2003nantonac} or at \url{https://www.kamishima.net/sushi/}.
\end{itemize}

\end{document}